\newtheorem{theorem}{Theorem}
\newtheorem{lemma}{Lemma}
\newcommand{\RN}[1]{%
  \textup{\uppercase\expandafter{\romannumeral#1}}%
}
\author{   Hila Naaman, {\it Student Member, IEEE}, Nimrod Glazer {\it Member, IEEE},  Moshe Namer, Daniel Bilik, Shlomi Savariego, and Yonina C. Eldar, {\it Fellow, IEEE}% <-this % stops a space
\thanks
	\thanks{\scriptsize All the authors are with the Faculty of Math and Computer Science, Weizmann Institute of Science, Israel. Email: hila.naaman@weizmann.ac.il}	
	\thanks{\scriptsize Parts of this work were presented at the international Symposium on Information Theory, ISIT, July 2022.}
	\thanks{\scriptsize 
  This research was partially supported by the European Union’s Horizon 2020 research and innovation program under grant No. 101000967-ERC-CoDeS, by the Israel Science Foundation under grant no. 0100101, and by the QuantERA grant  C’MON-QSENS.}
}
\let\NAT@parse\undefined
\begin{document}

\title{Hardware Prototype of a
Time-Encoding Sub-Nyquist ADC}
\maketitle

\begin{abstract}
Analog-to-digital converters (ADCs) are key components of digital signal processing. Classical samplers in this framework are controlled by a global clock.
At high sampling rates, clocks are expensive and power-hungry, thus increasing the cost and energy consumption of ADCs. It is, therefore, desirable to sample using a clock-less ADC at the lowest possible rate.
An integrate-and-fire time-encoding machine (IF-TEM) is a time-based power-efficient asynchronous design that is not synced to a global clock. 
Finite-rate-of-innovation (FRI) signals, ubiquitous in various applications, have fewer degrees of freedom than the signal's Nyquist rate, enabling sub-Nyquist sampling signal models.
This work proposes a power-efficient IF-TEM ADC architecture and demonstrates sub-Nyquist sampling and FRI signal recovery.
Using an IF-TEM, we implement in hardware the first sub-Nyquist time-based sampler.
We offer a feasible approach for accurately estimating the FRI parameters from IF-TEM data.
The suggested hardware and reconstruction approach retrieves FRI parameters with an error of up to -25dB while operating at rates approximately 10 times lower than the Nyquist rate, paving the way to low-power ADC architectures.
\end{abstract}

%
% Note that keywords are not normally used for peerreview papers.
\begin{IEEEkeywords}
Brain-inspired
computing, analog-to-digital conversion (ADC), time-based sampling hardware, integrate and fire TEM (IF-TEM), sub-Nyquist sampling, finite-rate-of innovation (FRI) signals.
\end{IEEEkeywords}
\IEEEpeerreviewmaketitle

\section{Introduction}
Analog-to-digital converters (ADCs) are electronic hardware components that facilitate the digital processing of signals and communication between computers and the physical world \cite{eldar2015sampling,unser2000sampling}. Traditional ADCs, also known as synchronous ADCs, are controlled by a global clock that operates at a rate that meets the Nyquist rate, requiring the acquisition of samples at intervals of $1/2W$ seconds for signals with a frequency no greater than $W$Hz \cite{nyquist1928certain}. However, synchronous ADCs have several limitations that may make them less suitable for certain applications. One limitation is high power consumption due to the continuous clock signal, which can be a significant disadvantage in energy-constrained systems such as battery-powered devices \cite{piyare2018demand}. Another limitation is the need for a stable and accurate clock signal, which becomes more challenging to achieve as the sampling rate in a high speed system increases, especially in noisy or interference-prone environments \cite{shake2004simple,siddharth2018low,kinniment1999low}. Synchronous ADCs also require complex clock circuits, which increase the complexity of the design and implementation \cite{rastogi2011integrate,akopyan2006level}.
% , rendering them inappropriate for portable devices such as hand-held ultrasound machines
Consequently, there is a need for innovative ADCs that address these limitations by reducing both power consumption and sampling rate.

%%%%%%%%%%%%%%%%%%%%%%%%%%%%%%%%%%%%%%%%%%%%%%%%%%%%%%%%%%%%%%%%%%%%%%%%%%%
The integrate-and-fire time encoding machine
(IF-TEM), an asynchronous energy-efficient event-driven sampler, is a promising alternative to conventional ADCs \cite{alvarado2011integrate,koscielnik2008asynchronous,miskowicz2010efficiency,rastogi2011integrate,sayiner1996level}.
In this architecture, no global clock is required, making the IF-TEM sampler low energy. Furthermore, compared to its traditional amplitude-based ADCs, TEMs use extremely simple, entirely analog, low-power, and small size encoders~\cite{alvarado2011integrate,tsividis2010event,miskowicz2015reducing,carvalho2020hardware}.
% In particular, it proposes signal acquisition only during the informative time interval, which is achieved by recording time instances when the integrated signal crosses a predefined threshold 
An IF-TEM integrates an input signal and then compares the integral to a threshold; if the threshold is met, the time instances are recorded ~\cite{lazar2004perfect,lazar2004time,adam2019multi,adam2020sampling,alexandru2019reconstructing,naaman2021fri}.
The IF-TEM sampler has been utilized for ultra-wide-band (UWB) communications \cite{maravic2004channel}, remote sensing \cite{davies2021advancing,simeone2019learning}, heart activity monitoring \cite{nallathambi2013integrate,alvarado2012time}, event-based cameras (also referred to as neuromorphic cameras) \cite{gallego2019event,alexandru2021time,lichtsteiner2008128,rebecq2019events} and other applications such as spiking neural network (SNN) interpretations, leading to better knowledge of how to utilize neuromorphic hardware and replace power-hungry ADCs \cite{adam2022time}.

In \cite{lazar2004perfect}, it was shown that bandlimited signals sampled by an IF-TEM can be perfectly recovered if the average sampling rate of the IF-TEM is higher than the signal's Nyquist sampling rate.
By requiring the bandwidth to be inversely proportional to the interval between time instances, the reconstruction of the original signal closely resembles the reconstruction of a bandlimited signal sampled with irregular amplitude samples.
In \cite{kong2011analog}, it was shown that a spectrally sparse signal could be recovered when the average IF-TEM sampling rate is below the Nyquist rate with high probability. The introduced TEM was affected by frequency-dependent quantization noise, which was most significant at high-frequency input signals.
Reconstruction of signals from time encoding has been generalized for signals in shift-invariant spaces \cite{gontier2014sampling}, and finite rate of innovation (FRI) signals \cite{alexandru2019reconstructing,rudresh2020time,naaman2021fri,hilton2021time}. 

FRI signals are characterized by a small number of degrees of freedom that permit sub-Nyquist sampling \cite{vetterli2002sampling,eldar2015sampling}.
Due to their prevalence in numerous scientific applications, such as radar \cite{bar2014sub,bajwa2011identification}, ultrasound \cite{tur2011innovation,wagner2012compressed, mulleti2014ultrasound}, time-domain optical-coherence tomography (TDOCT) \cite{blu2002new}, and light detection
and ranging (LIDAR) \cite{castorena2015sampling}, sampling and recovery of FRI signals, particularly through the use of IF-TEMs, is of great interest \cite{naaman2021fri,alexandru2019time,rudresh2020time,naaman2022uniqueness}. 
Most of the FRI sampling literature focuses on reducing the ADC’s sampling rate by using the signal structure. It ignores other
aspects of the ADC, such as its power consuming clock \cite{tur2011innovation,vetterli2002sampling,dragotti2007sampling,mulleti2017paley}. We address the issue of the synchronous ADCs' power consumption by utilizing the asynchronous IF-TEM sampler, which is energy-efficient.

Time-based sampling of FRI signals can be performed similarly to conventional FRI sampling techniques, such as kernel-based sampling \cite{rudresh2020time,alexandru2019reconstructing, hilton2021time,naaman2021fri,naaman2021time,naaman2022uniqueness}. 
% In  \cite{alexandru2019reconstructing}, the polynomial/exponential generating kernels were proposed. It was shown that $4L$ firings are required to reconstruct $L$ pulses with $2L$ degrees of freedom. A sequential reconstruction strategy was used for two FRI pulses separation.
% Ref. \cite{hilton2021time}, considered hyperbolic secant as a sampling kernel. It was shown that $3L+1$ firings are sufficient without any separation restriction.
% In \cite{rudresh2020time}, the authors utilized sampling kernels with frequency-domain alias canceling capabilities and suggested that FRI signals with $2L$ degrees of freedom can be retrieved from $2L+2$ IF-TEM firing instants. Compared to the approaches in \cite{alexandru2019reconstructing, hilton2021time}, the framework in \cite{rudresh2020time} allows for a lower firing rate, arbitrary pulse shape, and without any separation conditions. However, perfect recovery is not established, and the proposed approach may be unstable in the presence of noise. 
The authors in \cite{naaman2021fri} provided theoretical guarantees for the sampling and recovery of FRI signals using an IF-TEM, and proposed a sampling method that is more robust in the presence of noise than existing techniques.
Our work introduces a low-power IF-TEM ADC hardware that demonstrates sub-Nyquist sampling and FRI signal recovery based on the approach in \cite{naaman2021fri}.
We use hardware-measured data with time instances perturbations up to 35$ms$.
% As we use noisy hardware-measured data with perturbations of up to 0.04, the current reconstruction techniques are not applicable and thus a novel noise-resistant approach needs to be derived.
The jittered time instances are modeled as $t_n^{\prime} =t_n+ \epsilon_n$, where $t_n$ are the ideal time instances and we model the jitter noise as i.i.d. uniformly distributed
% $\epsilon_n\overset{\mathrm{iid}}{\sim} [-\frac{\sigma}{2},\frac{\sigma}{2}]$, where $\sigma$ represents the standard deviation.
$\epsilon_n\overset{\mathrm{iid}}{\sim} \mathcal{U}[-\frac{\sigma}{2},\frac{\sigma}{2}]$.
Based upon these assumptions and the measured time instances from the hardware, it appears that the noise level $\sigma$ fluctuates between 15-70 $ms$. 
% \textcolor{blue}{ It is assumed that the hardware induced noise is uniformly distributed between $\epsilon_n\overset{\mathrm{iid}}{\sim} [-\frac{\sigma}{2},\frac{\sigma}{2}]$.where $\sigma$ represents the standard deviation in time of total time perturbations. The lower and upper bounds were calculated from all the presented simulation cases in }
% Experiments conducted on our hardware indicated that $\sigma$ fluctuates between 10-40 $ms$.
As present reconstruction techniques are incapable of dealing with such large perturbations, we modify the method of \cite{naaman2021fri} to introduce robustness in the presence of large timing noise.

	\begin{figure}[t!]
		\centering
		\includegraphics[width=0.48\textwidth]{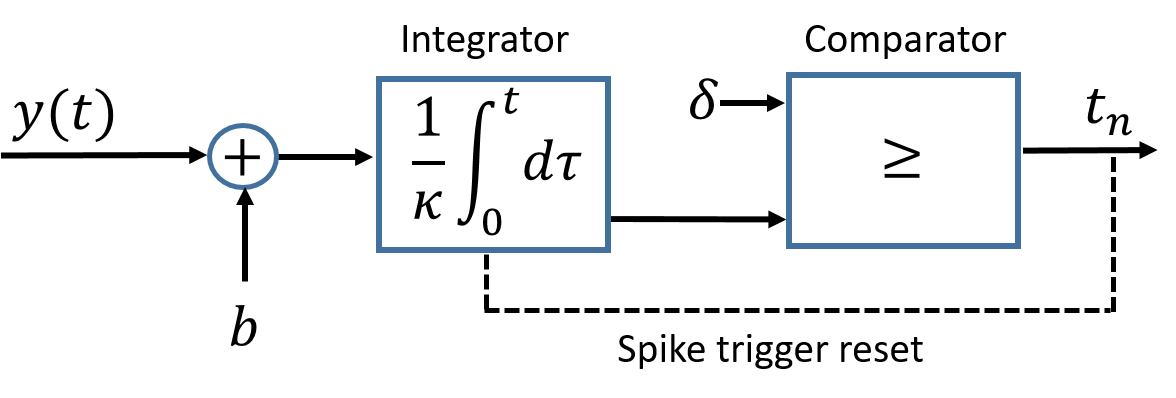}
		\caption{Time encoding machine with spike trigger reset. The input is biased by $b$, scaled by $\kappa$, and integrated. A time instant is recorded when the threshold $\delta$ is reached, after which the value of the integrator resets.}
		\label{fig:Vetterli_TEM}
	\end{figure}

% In practice, the IF-TEM circuit introduces noise into the signal, affecting the time instances $\{t_n\}$. Even without noise, it is only possible to determine time instances with limited precision.
% Therefore, this work introduces the robust nonlinear recovery technique for the IF-TEM hardware.

Our contribution is twofold: first, we introduce a robust sub-Nyquist sampling and reconstruction technique; then, we present a hardware implementation of sub-Nyquist TEM sampling of FRI signals.
Prior to acquiring timing information with the IF-TEM, similarly to \cite{naaman2021fri}, the signal is prefiltered using a sampling kernel which eliminates the zeroth frequency component of the signal for robust recovery.
Our reconstruction method relies on the sampling kernel selection as well as
introducing a new forward model to improve recovery from noisy hardware data. Compared to our previous results [16], here we present a simpler, straightforward proof for the recovery guarantees, which is based on using a partial sum of the measurements, resulting in more stable reconstruction.
We demonstrate that in the presence of noise, the proposed reconstruction technique outperforms the method in \cite{naaman2021fri}.
Then, we present the FRI-TEM hardware prototype that can be employed in low-power time-of-flight applications.
The hardware components are designed to accommodate a broad spectrum of FRI signal frequencies.
% The hardware consists of an integrator and the reset function. 
% which are designed to rapidly and completely discharge the integrator's capacitor, which operates in its linear regime.
The two primary components of the hardware are an integrator and a reset function. As long as the input signal is positive, the integrator capacitor must operate in its linear domain, which is continually charged or increasing.
In addition, the IF-TEM thresholding requires a means for a rapid reset. These are achieved by incorporating a differentiator and a FET into the reset function.
% a variety of FRI signals can be discharged completely and rapidly from the capacitor.}}
% {\hl{i would add a sentence or two on the HW and experimnets if you can. its a HW paper after all!}}
% \textcolor{blue}{As shown in Fig. 3. the capacitor forms an RC-circuit which integrates the input x(t)}

We demonstrate the capabilities of the system via several FRI signals. Prior to the IF-TEM system, a band-pass filter is employed as the sampling kernel. The filter eliminates unnecessary signal information and enables sub-Nyquist sampling. The designed hardware samples the filtered signal, resulting in time instances. One method of recording time instances or their differences is to use an oscilloscope. For estimating the FRI parameters, the Fourier coefficients are computed using our suggested algorithm, and the parameters are subsequently estimated using the annihilating filter technique. We demonstrate that it is possible to estimate FRI parameters with sub-Nyquist samples, taken at approximately 10 times the rate of innovation, which is significantly lower than the Nyquist rate of the signal.

% In the following, there is a description of the contributions and key aspects of the proposed hardware system.
% We demonstrate that our robust reconstruction method allows to accurately recover the FRI parameters using sub-Nyquist samples.
% Specifically, it is demonstrated that the system functions at 10-12 times the rate of innovation, which is significantly lower than the Nyquist rate of the signal.
\begin{figure}[t!]
		\centering
		\includegraphics[width=0.5\textwidth]{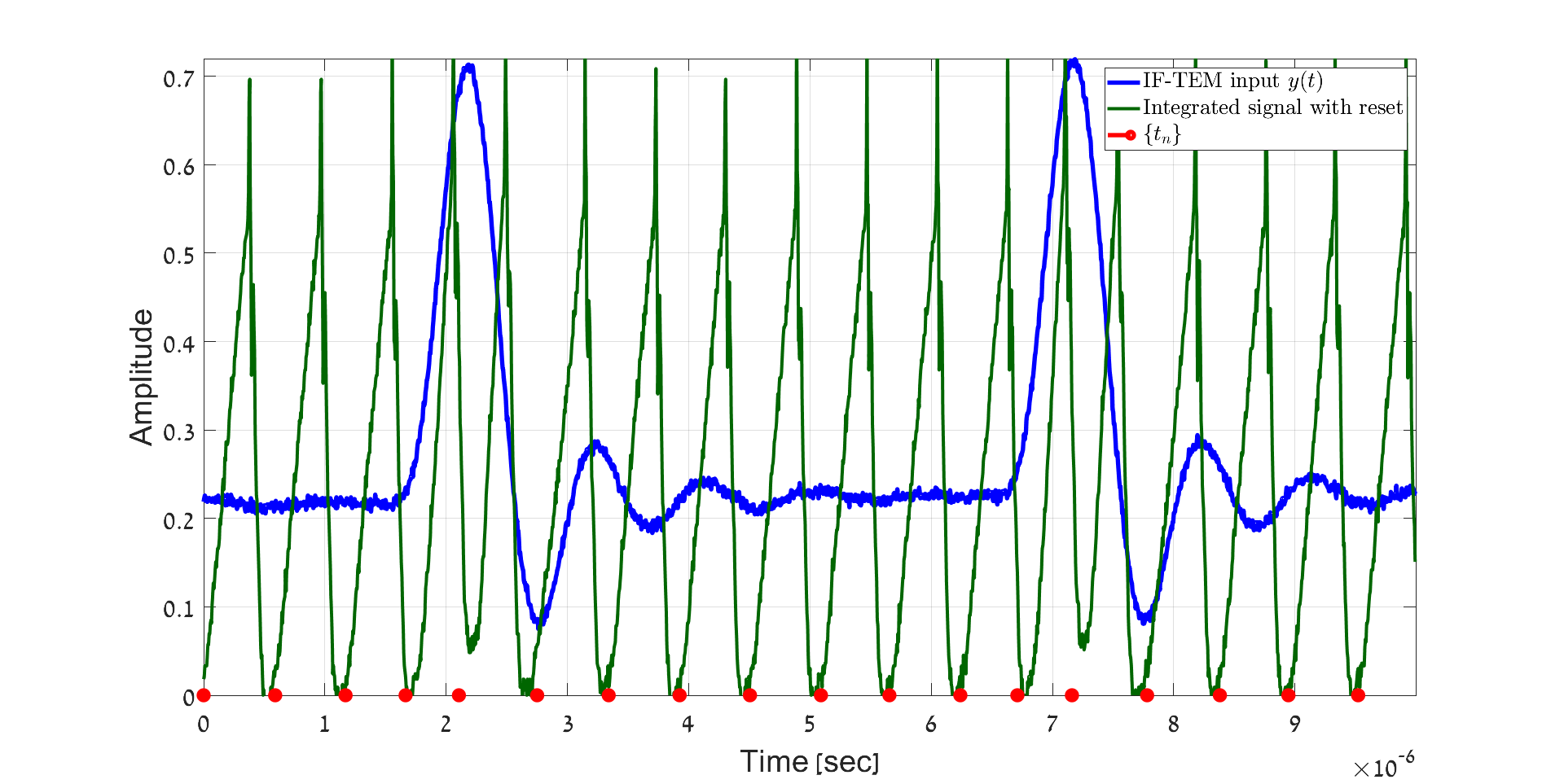}
		\caption{Our IF-TEM hardware sampling: the IF-TEM input signal $y(t)$ (blue), the integrator output (green), and the IF-TEM output time instances (red).}
		\label{fig:HW_integrator}
	\end{figure}
	
The rest of the paper is organized as follows. In Section \ref{sec:encode}, we formulate the problem of sampling and recovering an FRI signal using an IF-TEM, and discuss some background results. In Section \ref{sec:REC}, we present our robust reconstruction algorithm together with simulation results.
In Section \ref{sec:HW_spec},
we justify the required hardware specs and comment on the circuit challenges, followed by a detailed analog board's design work specifications.
Experimental hardware results of IF-TEM sub-Nyquist sampling and reconstruction are shown in Section \ref{sec:HW_experiments}.
Finally, we conclude the paper in Section \ref{sec:Conclusions}.

\section{Preliminary results Problem Formulation}
\label{sec:encode}
% In this section, we describe the problem formulation, the representation of FRI signals and highlight key features of the kernel used in our sampling scheme. Then we present the acquisition model using IF-TEM.
In this section, we review some previously established results in time encoding and FRI, followed by our formulation of the theoretical problem of FRI sampling and reconstruction utilizing an IF-TEM sampler.
\subsection{Time Encoding Machine}
\label{subsec:TEM}
We consider an IF-TEM whose operating principle is the same as in \cite{naaman2021fri} (see Fig. \ref{fig:Vetterli_TEM}). The input to the IF-TEM is a bounded signal $y(t)$, and the output is a series of spikes or time instances. An IF-TEM is parameterized by positive real numbers $b$, $\kappa$, and $\delta$.
A bias $b$ is added to a $c$-bounded signal $y(t)$ such that $|y(t)|\leq c<b<\infty$, and the sum is integrated and scaled by $\frac{1}{\kappa}$. When the resulting signal reaches the threshold $\delta$, the time instant $t_n$ is recorded, and the integrator is reset. 
The IF-TEM process is repeated to record subsequent time instants, i.e., if a time instant $t_n$ was recorded, the next time instant $t_{n+1}$ satisfies
\begin{align}
   \frac{1}{\kappa}\int_{t_n}^{t_{n+1}} (y(s)+b)\, ds = \delta.
   \label{eq:y_t_relation_st}
\end{align}

Fig. \ref{fig:HW_integrator} depicts the operational output of our IF-TEM hardware implementation using real data. The integrator constant $\kappa$ is determined from the integrator circuit hardware as demonstrated in Fig. \ref{fig:integrator}.  
The time encodings $\{ t_n, n\in\mathbb{Z} \}$ form a discrete representation of the analog signal $y(t)$ and the objective is to reconstruct $y(t)$ from them. Typically, reconstruction is performed using an alternative set of discrete representations $\{y_n, n \in \mathbb{Z}\}$ defined as
\begin{equation}
y_n \triangleq \int_{t_n}^{t_{n+1}}y(s)\, ds =  -b(t_{n+1}-t_n)+\kappa\delta.
\label{eq:trigger0}
\end{equation}
The measurements $\{y_n, n \in \mathbb{Z}\}$ are derived from the time encodings $\{ t_n, n\in\mathbb{Z} \}$ and IF-TEM parameters $\{b, \kappa, \delta\}$. 
\begin{figure}
    \centering
    \includegraphics[width=0.48\textwidth]{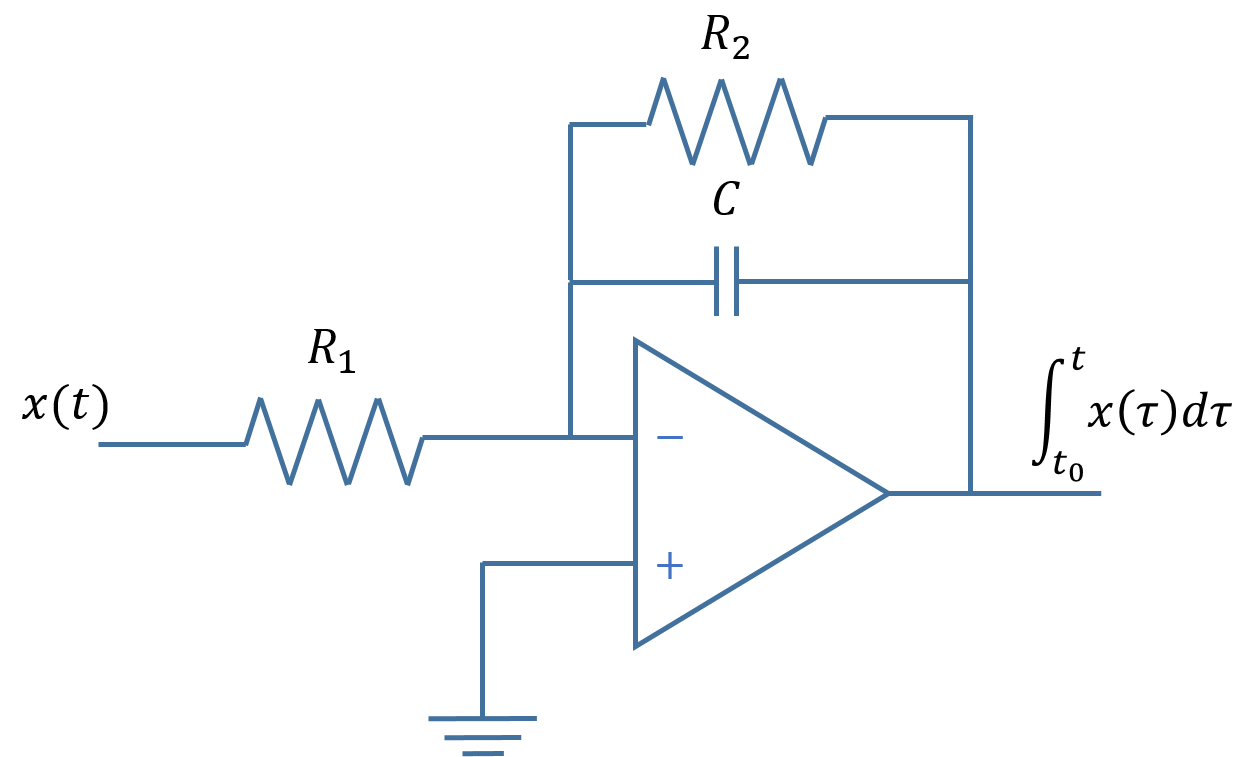}
    \caption{Hardware integrator circuit. Our hardware implementation is comprised of an operational amplifier, a capacitor $\emph{C}$ and resistors $\emph{R}_1$ and $\emph{R}_2$.}
    \label{fig:integrator}
\end{figure}
Using \eqref{eq:trigger0} and the fact that $|y(t)|\leq c$, it can be shown that for any two consecutive time instants \cite{lazar2003time,lazar2004time}:
\begin{equation}
 \frac{\kappa\delta}{b+c} \leq t_{n+1}-t_n \leq \frac{\kappa\delta}{b-c}.
   \label{eq:consecutive_time}
\end{equation}
%%%%%%%%%%%%%%%%%%%%%%%%%%%%%%%%%%%%%%%%%%%%%%%%%%%%%%%%%%%%%%%%%%%%%%%%%%%%%%%%%%%%%%%%
\subsection{FRI Signal Recovery}
Consider an FRI signal of the form
\begin{equation}
    x(t) = \sum_{\ell=1}^L a_{\ell} h(t-\tau_{\ell}),
    \label{eq:fri00}
\end{equation}
where the FRI parameters $\{(a_{\ell},\tau_{\ell})|\tau_{\ell} \in (0, T],a_{\ell}\in\mathbb{R} \}_{\ell=1}^L$ are the unknown amplitudes and delays. We assume that the pulse $h(t)\in L^2(\mathbb{R})$, and the number of FRI pulses $L$ are known. 
Since the analysis of recovering aperiodic FRI signals using IF-TEM measurements is similar to that of recovering periodic FRI signals \cite{naaman2021fri}, in this paper we will concentrate on the scenario of recovering $T$-periodic FRI signals.

Consider a $T$-periodic FRI signal, resulted from the linear combination of delayed versions of a prototype pulse $h(t)\in L^2(\mathbb{R})$, of the form
\begin{equation}
    x(t) = \sum_{p\in\mathbb{Z}}\sum_{\ell=1}^L a_{\ell} h(t-\tau_{\ell}-p T),
    \label{eq:fri}
\end{equation}
where the FRI parameters $\{(a_{\ell},\tau_{\ell})|\tau_{\ell} \in (0, T],a_{\ell}\in\mathbb{R} \}_{\ell=1}^L$ correspond to the unknown amplitudes and delays.
The rate of innovation of $x(t)$ is $\frac{2L}{T}$ and hence, $2L$ measurements are sufficient for perfect recovery \cite{vetterli2002sampling, eldar2015sampling}.
% Since $h(t)$ is a real valued function, then the FRI signal $x(t)$ real-valued.
\begin{figure}
    \centering
    \includegraphics[width=0.5\textwidth]{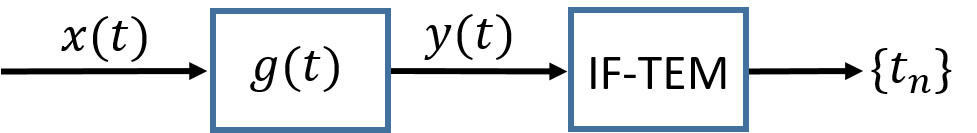}
    \caption{Sampling setup IF-TEM: Continuous-time signal $x(t)$ is filtered through a sampling kernel $g(t)$ and then sampled by using an IF-TEM to generate time instances $\{t_n\}$.}
    \label{fig:IAF}
\end{figure}

Since $x(t)$ is $T$-periodic, it has a Fourier series representation
\begin{equation}
    x(t) = \sum_{k\in\mathbb{Z}}\hat{x}[k]e^{jk\omega_0 t},
    \label{eq:x_initial}
\end{equation}
where $\omega_0=\frac{2\pi}{T}$. The Fourier-series coefficients (FSCs) are given by 
\begin{equation}
 \hat{x}[k] = \frac{1}{T}\hat{h}(k\omega_0)\sum_{\ell=1}^{L}a_{\ell}e^{-j k\omega_0\tau_{\ell}},
 \label{eq:x_fourier}
\end{equation}
where $\mathcal{K}$ is a set of integers, $\hat{h}(\omega)$ is the continuous-time Fourier transform of $h(t)$
 \cite{eldar2015sampling}. It is assumed that $\hat{h}(k\omega_0)\neq 0$ for $k\in\mathcal{K}$.

It was shown in \cite{vetterli2002sampling}, that the parameters $\{a_{\ell}, \tau_{\ell}\}_{\ell=1}^L$ can be uniquely computed from $2L$ samples of the FSCs $\hat{x}[k]$ using spectral analysis methods, such as the annihilating filter (AF) \cite{eldar2015sampling}. Thus, FRI signal reconstruction is reduced to the problem of uniquely determining the desired number of FSCs from the signal measurements. 
\subsection{Kernel and Sub-Nyquist Sampling}
\label{sec:kernel}
A crucial component of an FRI sampling architecture is the sampling kernel. Generally, sampling kernels with compact support are preferable from a hardware implementation perspective. We consider IF-TEM sampling and recovery with a compactly supported sum-of-sincs (SoS) kernel for FRI signals. Consider an SoS kernel 
 generated by
\begin{equation}
\hat{g}(\omega) = \sum_{k\in \mathcal{K}} \text{sinc}\left( \frac{\omega}{\omega_0}-k \right).
\label{eq:SoStime}
\end{equation}
Based on the robust sampling kernel presented in \cite{naaman2021fri}, and to maintain the real-valued nature of the filter response and output, we select $\mathcal{K}$ as 
\begin{equation}
    \mathcal{K} = \{-K,\cdots,-1,1,\cdots,K \}, \,\,\,\,\,\text{where}\,\,\,\,\,K\geq 2L.
    \label{eq:KK}
\end{equation}
The sampling kernel resilience is a result of selecting a support set $\mathcal{K}$ that is symmetric about zero but does not include zero.
The filtered signal $y(t) = (x*g)(t)$ is given as
\begin{equation}
    y(t) =\sum_{k\in \mathcal{K}}
    \hat{x}[k]\hat{g}(k\omega_0)  e^{jk\omega_0 t}=\sum_{k\in \mathcal{K}}
    \hat{x}[k] e^{jk\omega_0 t}.
    \label{eq:yt_by_x22}
\end{equation}
In this case, the forward model or the relation between $y_n$'s and the desired FSCs is given by 
\begin{equation}
    \begin{split}
    y_n =\sum_{k\in \mathcal{K}} \frac{\hat{x}[k]}{jk\omega_0}\left( e^{jk\omega_0 t_{n+1}}- e^{jk\omega_0 t_{n}} \right).
    \end{split}
    \label{eq:yx_rel}
\end{equation}
It was shown in \cite{naaman2021fri}, that $y(t)$ is bounded provided that $\max\{a_{\ell}|a_{\ell}\in\mathbb{R} \}_{\ell=1}^L < \infty$ and the pulse $h(t)$ is absolutely integrable. 

%%%%%%%%%%%%%%%%%%%%%%%%%%%%%%%%%%%%%%%%%%
To extract the FSCs from \eqref{eq:yx_rel}, let $\mathbf{y}=$  $[\int_{t_1}^{t_2}y(t)dt, \int_{t_2}^{t_3}y(t)dt, \cdots, \int_{t_{N-1}}^{t_N}y(t)dt]^{\top}$, where $N$ is the number of time instants in the interval $T$.
The measurements $\mathbf{y}$ and the FSCs
\begin{figure*}[t]
 \begin{align}
  \mathbf{B}=     \begin{bmatrix}
 e^{-jK\omega_0t_2}-e^{-jK\omega_0t_1}& \cdots & e^{-j\omega_0t_2}-e^{-j\omega_0t_1}&e^{j\omega_0t_2}-e^{j\omega_0t_1}  & \cdots & e^{jK\omega_0t_2}-e^{jK\omega_0t_1}\\       
 e^{-jK\omega_0t_3}-e^{-jK\omega_0t_2}& \cdots & e^{-j\omega_0t_3}-e^{-j\omega_0t_2}&e^{j\omega_0t_3}-e^{j\omega_0t_2}  & \cdots & e^{jK\omega_0t_3}-e^{jK\omega_0t_2}\\ 
 \vdots&  & \vdots & \vdots&  & \vdots\\
e^{-jK\omega_0t_N}-e^{-jK\omega_0t_{N-1}}& \cdots &e^{-j\omega_0t_N}-e^{-j\omega_0t_{N-1}}&e^{j\omega_0t_N}-e^{j\omega_0t_{N-1}}  & \cdots & e^{jK\omega_0t_N}-e^{jK\omega_0t_{N-1}}\\  
\end{bmatrix}.
\label{eq:Bmat}
  \end{align}
\end{figure*}
\begin{align}
\mathbf{\hat{x}} = \left[-\frac{\hat{x}[-K]}{jK\omega_0}, \cdots, -\frac{\hat{x}[-1]}{j\omega_0},\frac{\hat{x}[1]}{j\omega_0},\cdots, 
\frac{\hat{x}[K]}{jK\omega_0}\right]^{\top}
\label{eq:xhatProof2}
\end{align}
are related as 
\begin{equation}
    \mathbf{y}=\mathbf{B}\mathbf{\hat{x}},
    \label{eq:mesRel2}
\end{equation}
where $\mathbf{B}$ is given in \eqref{eq:Bmat}. It was shown in \cite{naaman2021fri}, that the matrix $\mathbf{B}$ has full column rank and is uniquely left invertible. Then the Fourier coefficients vector can be computed as
\begin{equation}
    \mathbf{\hat{x}} = \mathbf{B}^\dagger \mathbf{y},
    \label{eq: Moore-Penrose_inverse}
\end{equation}
where $\mathbf{B}^\dagger$ denotes the Moore-Penrose inverse.
Prefect reconstruction is established by \cite{naaman2021fri} when $N \geq 4L+2$ and $|\mathcal{K}|\geq 2L$, as summarized in the following theorem:
\begin{theorem}[Section \RN{3}.D in \cite{naaman2021fri}]
\label{theorem:FRI}
Let $x(t)$ be a $T$-periodic FRI signal of the following form
\begin{align*}
      x(t) = \sum_{p\in\mathbb{Z}}\sum_{\ell=1}^L a_{\ell} h(t-\tau_{\ell}-p T),
\end{align*}
where the pulse $h(t)\in L^2(\mathbb{R})$, and the number of FRI pulses $L$ are known. Consider the sampling mechanism shown in Fig. ~\ref{fig:IAF}. Let the sampling kernel $g(t)$ satisfy
\begin{align*}
    \hat{g}(k\omega_0) =       
    \begin{cases}
      1 & \text{if $k\in \mathcal{K}=\{-K,\cdots,-1,1,\cdots,K \}$},\\
      0 & \text{otherwise},
    \end{cases}
    \end{align*}
 and $\max\limits_t|(h*g)(t)|<\infty$. The filtered signal $y(t) = (x*g)(t)$. Suppose the IF-TEM
parameters $\{b, \kappa, \delta\}$ are chosen such that $b > c$ where $c=\max\limits_t |y(t)|$ , and
\begin{equation}           
\frac{b-c}{\kappa\delta} \geq \frac{2K+2}{T}. \label{eq:sample_bound}
\end{equation} 
Then the parameters $\{a_{\ell}, \tau_{\ell}\}_{\ell=1}^L$ can be perfectly recovered from the IF-TEM outputs if
\begin{enumerate}
    \item $K \geq 2L$ when $\{t_\ell\}_{\ell=1}^L$ are off-grid.
    \item $K \geq L$ when $\{t_\ell\}_{\ell=1}^L$ are on-grid.
\end{enumerate}
\end{theorem}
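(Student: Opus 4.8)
\emph{Proof plan.} The plan is to decouple the problem into two stages: first recover the Fourier-series coefficients $\{\hat x[k]\}_{k\in\mathcal K}$ of $x(t)$ exactly from the IF-TEM output, and then apply classical spectral estimation to extract $\{a_\ell,\tau_\ell\}_{\ell=1}^L$. For the first stage I would begin from the forward model \eqref{eq:yx_rel}: expressing each measurement $y_n=\int_{t_n}^{t_{n+1}}y(s)\,ds$ through the time instances yields the linear system $\mathbf y=\mathbf B\hat{\mathbf x}$ of \eqref{eq:mesRel2} with $\mathbf B$ as in \eqref{eq:Bmat}. Equivalently, and this is the partial-sum reformulation that improves numerical stability, I would work with the running sums $Y_n=\sum_{m=1}^{n}y_m=\int_{t_1}^{t_{n+1}}y(s)\,ds$, which are, up to a single common additive unknown $F(t_1)$, samples of the trigonometric-polynomial antiderivative $F(t)=\sum_{k\in\mathcal K}\frac{\hat x[k]}{jk\omega_0}e^{jk\omega_0t}$ of $y$; since this only left-multiplies $\mathbf B$ by the invertible cumulative-sum matrix, it leaves the rank analysis below unchanged.

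The core step is to show $\mathbf B$ has full column rank, so that $\hat{\mathbf x}=\mathbf B^\dagger\mathbf y$ recovers every $\hat x[k]$, $k\in\mathcal K$. First I would use the spike-density guarantee: by \eqref{eq:consecutive_time} we have $t_{n+1}-t_n\le \kappa\delta/(b-c)$, so the number $N$ of time instances falling in any window of length $T$ obeys $N\ge T(b-c)/(\kappa\delta)\ge 2K+2$ by \eqref{eq:sample_bound}, and these $t_1,\dots,t_N$ are distinct and contained in a single period. Then, assuming $\mathbf B\mathbf u=\mathbf 0$, I would set $q(t)=\sum_{k\in\mathcal K}u_ke^{jk\omega_0t}$; the $n$-th equation reads $q(t_{n+1})=q(t_n)$, so $q(t_1)=\dots=q(t_N)=:\gamma$ and the trigonometric polynomial $q(t)-\gamma$ vanishes at the $N\ge 2K+2$ distinct in-period points $t_1,\dots,t_N$. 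Since $e^{jK\omega_0t}\bigl(q(t)-\gamma\bigr)$ is an algebraic polynomial in $e^{j\omega_0t}$ of degree at most $2K$, it has at most $2K$ zeros per period unless it is identically zero; hence $q\equiv\gamma$, and because $0\notin\mathcal K$ and $\{e^{jk\omega_0t}\}_{k\in\mathcal K}$ are linearly independent, $\mathbf u=\mathbf 0$ (and $\gamma=0$). This is the step I expect to be the main obstacle: it is where the IF-TEM inter-spike bound \eqref{eq:consecutive_time}, the rate condition \eqref{eq:sample_bound}, the requirement of \emph{distinct} instances inside one period, and the trigonometric-polynomial root count all have to be lined up; it also quietly relies on $c=\max_t|y(t)|<\infty$, which holds because $(h*g)$ is bounded and $h$ is absolutely integrable (established in \cite{naaman2021fri}).

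For the second stage, since $\hat h(k\omega_0)\ne0$ on $\mathcal K$, I would form $d_k:=T\hat x[k]/\hat h(k\omega_0)=\sum_{\ell=1}^{L}a_\ell e^{-jk\omega_0\tau_\ell}$ from \eqref{eq:x_fourier}. In the off-grid case, the sub-block $\{1,\dots,K\}\subset\mathcal K$ supplies $K\ge 2L$ \emph{consecutive} values of $d_k$, which is precisely what the annihilating-filter (Prony) method requires to recover the $L$ distinct nodes $e^{-j\omega_0\tau_\ell}$, and hence $\{\tau_\ell\}$; the amplitudes follow by solving the induced $L\times L$ Vandermonde system. In the on-grid case the $\tau_\ell$ lie on a known finite grid, so the real-signal conjugate symmetry $d_{-k}=\overline{d_k}$ effectively doubles the usable coefficients to $2K\ge 2L$, enough for the relevant Vandermonde submatrices to have full column rank and thus to uniquely identify the $L$ active grid points and their amplitudes. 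Combining the two stages gives perfect recovery under the stated conditions; everything past the full-rank statement is the standard FRI reconstruction pipeline.
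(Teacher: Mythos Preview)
Your two–stage plan and the full–rank argument for $\mathbf B$ are correct; the difference from the paper is in how the rank step is executed. The paper carries out precisely the partial–sum reformulation you mention but then \emph{uses} it: writing $z_n=\sum_{i=1}^{n-1}y_i$ gives $\mathbf z=\mathbf A\hat{\mathbf z}$ with $\mathbf A$ an honest Vandermonde matrix in the nodes $e^{j\omega_0 t_n}$ (augmented by a column of ones for the unknown constant), so full column rank follows immediately from distinctness of $t_2,\dots,t_N$ once $N-1\ge 2K+1$. You instead stay with $\mathbf B$ and prove injectivity by a trigonometric root count, which is essentially the classical proof that Vandermonde matrices are nonsingular; the two routes are equivalent, but the paper's formulation makes the Vandermonde structure explicit and invokes it as a one–line fact (this is the ``simpler, straightforward proof'' advertised in the introduction). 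One minor correction to your second stage: the on–grid case is not explained by conjugate symmetry ``doubling'' the coefficients, since $d_{-k}=\overline{d_k}$ carries no new information. The saving when $\{\tau_\ell\}$ lie on a known grid comes from the second stage becoming a linear system over that grid whose DFT/Vandermonde submatrix already has full column rank with $K\ge L$; the off–grid case genuinely needs $K\ge 2L$ consecutive coefficients for the annihilating filter, exactly as you say.
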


In practice, our IF-TEM HW circuit introduces noise into the signal, which causes the time occurrences $t_n$ to be perturbed. As a result, unstable recovery occurred when the aforementioned algorithm was utilized in the process of reconstructing the data from the hardware measurements (see Section \ref{sec:REC} for more details). 
Therefore, a reconstruction strategy that is more robust to noise is required.
%%%%%%%%%%%%%%%%%%%%%%%%%%%%%%%%%%%%%%%%%%%%%%%%%%%%%%%%%%%%%%%%%%%%%%%%%%%%%%%%%%%%%%%%%%%
\subsection{Problem Formulation}

% completely determined by a finite number of degrees of freedom per unit of time, i.e., by their rate of innovation \cite{vetterli2002sampling}
% Since the rate of innovation of $x(t)$ is $\frac{2L}{T}$, only $2L$ Fourier coefficients of $x(t)$ are sufficient to identify the FRI parameters \cite{eldar2015sampling}. 
Consider a $T$-periodic FRI signal of the form of \eqref{eq:fri} and a sampling mechanism as shown in Fig. \ref{fig:IAF}. The signal $x(t)$ is passed through the sampling kernel $g(t)$ as defined in \eqref{eq:SoStime}, and the resulting signal $y(t)$ is sampled using an IF-TEM.
Both the time encodings$\{t_n\}_{n=1}^N$ and the amplitude measurements $\{y_n\}_{n=1}^N$ correspond to a discrete representation of $y(t) = (x*g)(t)$. In other words, $\{t_n\}$ encodes information of the FRI signal.
As our objective is to design robust hardware, the FRI parameters $\{a_{\ell},\tau_{\ell}\}_{\ell=1}^L$ need to be accurately estimated from the IF-TEM firings.
To this end, together with the hardware implementation, a robust recovery algorithm is needed. 
%%%%%%%%%%%%%%%%%%%%%%%%%%%%%%%%%%%%%%%%%%%%%%%%%%%%%%%%%%%%%%%%%%%%%%%%%%%%%%%
In the following section, we first introduce our robust recovery mechanism that perfectly recovers the Fourier series coefficients $\{\hat{x}[k]\}_{k \in \mathcal{K}}$ from IF-TEM observations in the absence of noise with as few as $4L+2$ spikes inside an interval $T$. Then, we illustrate the resilience of our method in the presence of noise and demonstrate that it outperforms the one proposed in \cite{naaman2021fri}. In Section \ref{sec:HW_spec}, we discuss our hardware prototype realizations.

% Next, we present our robust recovery mechanism such that the Fourier series coefficients $\{\hat{x}[k]\}_{k \in \mathcal{K}}$ are perfectly recovered from the IF-TEM measurements in the absence of noise.
% We demonstrate that only $4L+2$ spikes within an interval $T$ are necessary to uniquely determine $\{\hat{x}[k]\}_{k\in \mathcal{K}}$ from time encodings in the absence of noise, that the reconstruction is resilient in the presence of noise and outperform the one suggested in \cite{naaman2021fri}.

%%%%%%%%%%%%%%%%%%%%%%%%%%%%%%%%%%%%%%%%%%%%%%%%%%%%%%%%%%%%%%%%%%%%%%%%%%%%%%%%
\section{Robust Sub-Nyquist sampling and Reconstruction of FRI Signals from IF-TEM }
\label{sec:REC}

The IF-TEM circuit introduces noise into the signal, which perturbs the time instances $\{t_n\}$. Even in the absence of noise, the time instances can only be determined with limited precision. 
The modeled jittered time instances are modeled as
\begin{equation}
    t_n^{\prime} =t_n+ \epsilon_n,
    \label{eq:jitter}
\end{equation}
where $t_n$ are the ideal time instances and $\epsilon_n\overset{\mathrm{iid}}{\sim} \mathcal{U}[-\frac{\sigma}{2},\frac{\sigma}{2}]$ is the noise jitter.
Our experiments on our hardware showed that the noise level $\sigma$ fluctuates between $15-70$ $ms$.
% Experiments on our hardware demonstrated that $\sigma$ fluctuates between $15-70$ $ms$.
Since the method presented in \cite{naaman2021fri} to reconstruct FRI signals from IF-TEM measurements resulted in an inconsistent recovery using hardware data, a more noise-tolerant reconstruction method is presented next. 
%\textcolor{red}{
% In this section, we introduce a robust sub-Nyquist sampling and reconstruction of FRI signals using an IF-TEM sampler. Then, we demonstrate that our enhanced reconstruction technique outperforms the method presented in \cite{naaman2021fri}.}

We compare the proposed recovery method with the reconstruction described by \cite{naaman2021fri} in the presence of perturbations to the measured time instances. Both approaches employ a sample kernel lacking the zeroth frequency. While the reconstruction approach proposed by \cite{naaman2021fri} used the forward equation defined in \eqref{eq:yx_rel}, our new algorithm is based on an alternative formulation presented in \eqref{eq:partial} below.
%%%%%%%%%%%%%%%%%%%%%%%%%%%%%%%%%%%%%%%%%
\subsection{Robust Reconstruction}
This section presents a method for determining the Fourier coefficients of the FRI signal that is more robust and improves recovery.
The recovery approach described in \cite{naaman2021fri}, and discussed in the previous section, is based on computing the FSCs $\mathbf{\hat{x}}$ of the FRI signal $x(t)$ using \eqref{eq: Moore-Penrose_inverse}. In the case of noise, this leads to a perturbation in the measurements $y_n$ as well as the matrix  $\mathbf{B}$ defined in \eqref{eq:yx_rel} and \eqref{eq:Bmat}, respectively. In this case, while computing the FSCs, the stability of $\mathbf{B}$, which is measured by the condition number of the matrix, impacts the results. Next, we show that by utilizing a partial summation of $y_n$, perfect recovery is achieved similarly to Theorem 1. In the noisy scenario, the resulting method is more robust. As we show below, when employing a partial summation of $y_n$, we end up with a recovery problem similar to \eqref{eq: Moore-Penrose_inverse} but with the matrix $\mathbf{A}$ defined in \eqref{eq:matrixB} replacing $\mathbf{B}$. This matrix has a better condition number than $\mathbf{B}$. 

To gain intuition as to why this is the case, we demonstrate that for every $k\in\mathcal{K}$, utilizing the partial summation for the measurements reduces the noise in each element of $\mathbf{A}$ by half compared to its corresponding element in $\mathbf{B}$. This result is summarized in the following lemma.
\begin{lemma}
\label{lemma1}
Let 
$[\mathbf{B}]_{nk} = e^{jk\omega_0 t_{n+1}}- e^{jk\omega_0 t_{n}}$ be the entries of matrix $\mathbf{B}$, where $n=1,\cdots,N-1$, $k\in\mathcal{K}$. 
Let $[\mathbf{A}]_{nk}=e^{jk\omega_0 t_{n+1}}$ be the entries of matrix $\mathbf{A}$, where $n=1,\cdots,N-1$, $k\in\mathcal{K}\cup\{0\}$. 
The jittered time instances are modeled as $t_n^{\prime} =t_n+ \epsilon_n$, where $t_n$ are the ideal time instances and the jitter noise is modeled as $\epsilon_n\overset{\mathrm{iid}}{\sim} \mathcal{U}[-\frac{\sigma}{2},\frac{\sigma}{2}]$, i.i.d. uniformly distributed.
For every $t_n^{\prime}$ and $k\in\mathcal{K}$, 
\begin{equation}
    \text{var}\left([\mathbf{B}]_{nk}\right)=2\text{var}\left([\mathbf{A}]_{nk}\right),
\end{equation}
where $\text{var}$ is the variance.
\end{lemma}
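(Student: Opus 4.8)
The plan is to write each entry explicitly in terms of the jitter variables and reduce everything to the single scalar $\mu \triangleq \mathbb{E}\big[e^{jk\omega_0\epsilon_n}\big]$. Substituting $t_n^{\prime} = t_n + \epsilon_n$ into the definitions gives $[\mathbf{A}]_{nk} = e^{jk\omega_0 t_{n+1}}\,e^{jk\omega_0\epsilon_{n+1}}$ and $[\mathbf{B}]_{nk} = e^{jk\omega_0 t_{n+1}}\,e^{jk\omega_0\epsilon_{n+1}} - e^{jk\omega_0 t_n}\,e^{jk\omega_0\epsilon_n}$, where the factors $e^{jk\omega_0 t_{n+1}}$ and $e^{jk\omega_0 t_n}$ are deterministic and of unit modulus. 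Throughout I would use the standard variance of a complex random variable, $\text{var}(Z) = \mathbb{E}\big[|Z - \mathbb{E}Z|^2\big] = \mathbb{E}\big[|Z|^2\big] - |\mathbb{E}Z|^2$.

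First I would evaluate $\mu$: since $\epsilon_n \sim \mathcal{U}[-\tfrac{\sigma}{2},\tfrac{\sigma}{2}]$, a direct integration gives $\mu = \tfrac{2}{k\omega_0\sigma}\sin\!\big(\tfrac{k\omega_0\sigma}{2}\big)$, which is \emph{real}. Because $|e^{jk\omega_0\epsilon_n}| = 1$ we also have $\mathbb{E}\big[|e^{jk\omega_0\epsilon_n}|^2\big] = 1$. Hence for $\mathbf{A}$, pulling out the unit-modulus deterministic factor immediately yields $\text{var}\big([\mathbf{A}]_{nk}\big) = 1 - |\mu|^2 = 1 - \mu^2$.

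Next I would treat $\mathbf{B}$. Writing $a = e^{jk\omega_0 t_{n+1}}$, $b = e^{jk\omega_0 t_n}$, $U = e^{jk\omega_0\epsilon_{n+1}}$, $V = e^{jk\omega_0\epsilon_n}$, I expand $|aU - bV|^2 = |U|^2 + |V|^2 - 2\,\mathrm{Re}\big(a\bar b\, U\bar V\big) = 2 - 2\,\mathrm{Re}\big(a\bar b\, U\bar V\big)$. Taking expectations and using that $\epsilon_n$ and $\epsilon_{n+1}$ are independent, so $\mathbb{E}[U\bar V] = \mu\bar\mu = \mu^2$, gives $\mathbb{E}\big[|[\mathbf{B}]_{nk}|^2\big] = 2 - 2\mu^2\cos\!\big(k\omega_0(t_{n+1}-t_n)\big)$. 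Separately, $\mathbb{E}\big[[\mathbf{B}]_{nk}\big] = \mu(a-b)$, hence $\big|\mathbb{E}[[\mathbf{B}]_{nk}]\big|^2 = \mu^2|a-b|^2 = 2\mu^2\big(1 - \cos(k\omega_0(t_{n+1}-t_n))\big)$. Subtracting, the $\cos$ terms cancel and $\text{var}\big([\mathbf{B}]_{nk}\big) = 2 - 2\mu^2 = 2(1-\mu^2) = 2\,\text{var}\big([\mathbf{A}]_{nk}\big)$, which is the claim.

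The computation is essentially routine; the only points needing care are (i) fixing the complex-variance convention and staying consistent with it, and (ii) the cancellation of the cross term $\cos\!\big(k\omega_0(t_{n+1}-t_n)\big)$ — this is exactly where the independence of $\epsilon_n$ and $\epsilon_{n+1}$ enters, and without it the identity would not hold. I would also remark that the equality is established entrywise for each fixed $n$ and each $k\in\mathcal{K}$, and that the restriction $k\in\mathcal{K}$ (excluding $k=0$) is harmless: for $k=0$ both entries are deterministic and the identity degenerates to $0 = 2\cdot 0$.
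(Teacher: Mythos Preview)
Your proof is correct and follows essentially the same approach as the paper: both arguments hinge on the independence of $\epsilon_n$ and $\epsilon_{n+1}$ together with the fact that the deterministic phase factors have unit modulus. The paper's version is simply a compressed form of yours --- it invokes $\text{var}(X-Y)=\text{var}(X)+\text{var}(Y)$ for independent $X,Y$ and $\text{var}(cZ)=|c|^2\text{var}(Z)$ directly, whereas you unpack this identity by computing $\mathbb{E}\big[|[\mathbf{B}]_{nk}|^2\big]$ and $\big|\mathbb{E}[\mathbf{B}]_{nk}\big|^2$ separately and exhibiting the cancellation of the $\cos\!\big(k\omega_0(t_{n+1}-t_n)\big)$ cross term explicitly; the explicit evaluation of $\mu$ is not actually needed for the variance identity itself.
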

\begin{proof}
By utilizing the fact that $t_n^{\prime} = t_n + \epsilon_n$, and using \eqref{eq:Bmat} and \eqref{eq:matrixB}, it follows that,
\begin{equation}
    \begin{split}
        \text{var}\left([\mathbf{B}]_{nk}\right)&= \text{var} \left(e^{jk\omega_0 (t_{n+1}+\epsilon_{n+1})}- e^{jk\omega_0 (t_{n}+\epsilon_{n})}\right)\\&= \lvert e^{jk\omega_0 t_{n+1}}\rvert ^2\,\text{var} \left(e^{jk\omega_0 \epsilon_{n+1}}\right)
        \\&+\lvert e^{jk\omega_0 t_{n}}\rvert ^2\, \text{var} \left(e^{jk\omega_0 \epsilon_{n}}\right)\\&=2\text{var}\left(e^{jk\omega_0 \epsilon_{n}}\right)=2 \text{var}\left([\mathbf{A}]_{nk}\right),
    \end{split}
\end{equation}
establishing the lemma.
\end{proof}

It can be intuitively inferred that by utilizing the partial summation, the noise in each element of $[\mathbf{A}]_{nk}$ becomes smaller than the corresponding noise in $[\mathbf{B}]_{nk}$. Consequently, the matrix $\mathbf{A}$ has a better condition number than $\mathbf{B}$. This can be explained by the fact that the condition number of a matrix is a measure of the sensitivity of the matrix to small perturbations in its elements, and a smaller condition number indicates that the matrix is less sensitive to such perturbations. Therefore, by reducing the noise in the elements of $\mathbf{A}$ using the partial summation, we can improve its condition number.
% Intuitively, utilizing the partial summation causes the noise in each element in $[\mathbf{A}]_{nk}$ to be smaller than its corresponding one in $[\mathbf{B}]_{nk}$, and the matrix $\mathbf{A}$ has better condition number than the matrix $\mathbf{B}$.
% which leads to an accurate recovery.
% \text{var}\left([\mathbf{B}]_{nk}\right)&= \text{var} \left(e^{jk\omega_0 t_{n+1}}e^{jk\omega_0 \epsilon_{n+1}}- e^{jk\omega_0 t_{n}}e^{jk\omega_0 \epsilon_{n}}\right)&= \lvert e^{jk\omega_0 t_{n+1}}\rvert  \text{var}\left(e^{jk\omega_0 \epsilon_{n+1}}\right)+\lvert e^{jk\omega_0 t_{n}}\rvert \text{var}\left(e^{jk\omega_0 \epsilon_{n}}\right)\\&= 

In the next step, we employ the partial summation of $y_n$ to present a perfect recovery guarantee for FRI signals by using IF-TEM. Instead of recovering the FSCs from $y_n$ through the forward model \eqref{eq:yx_rel} with $\mathcal{K}$ in \eqref{eq:KK}, which defines the relation between $y_n$ and the FSCs ${\hat{x}[k]}$, we propose an alternative model that is based on $z_n$. These are the partial sums of the measurements $y_n$ defined as
% Next, by employing the partial summation of $y_n$, we present a perfect recovery guarantee for FRI signals by using IF-TEM.
% Instead of recovering the FSCs from $y_n$s by using the forward model \eqref{eq:yx_rel} with $\mathcal{K}$ in \eqref{eq:KK}, which defines the relation between $y_n$s and the FSCs ${\hat{x}[k]}$, we propose an alternative model based on $z_n$ which are the partial sums of the measurements $y_n$:
% The samples of $z(t n)$ can be computed based on the partial sum of the measurements $y_n$.
\begin{equation}
    z_n=\sum_{i=1}^{n-1} y_i = \sum_{k\in \mathcal{K} } \frac{\hat{x}[k]}{jk\omega_0} \left( e^{jk\omega_0 t_{n}}- e^{jk\omega_0 t_{1}} \right),
    \vspace{.1in}
    \label{eq:partial}
\end{equation}
where $ n = 2,\cdots,N$.
% This operation allows us to obtain a new sequence of measurements $z_n$ by summing the elements of the original sequence $y_n$ in a cumulative manner.
% The partial sum of $y_n$ is given by
% \begin{equation}
%     z_n=\sum_{i=1}^{n-1} y_i = \sum_{k\in \mathcal{K} } \frac{\hat{x}[k]}{jk\omega_0} \left( e^{jk\omega_0 t_{n}}- e^{jk\omega_0 t_{1}} \right),
%     \vspace{.1in}
%     \label{eq:partial}
% \end{equation}
% where $ n = 2,\cdots,N$.
Note that \eqref{eq:partial} can be written as 
\begin{equation}
    z_n= \sum_{k\in \mathcal{K} } \frac{\hat{x}[k]}{jk\omega_0}  e^{jk\omega_0 t_{n}}+ c,
    \vspace{.1in}
    \label{eq:partial2}
\end{equation}
where 
\begin{equation}
     c = - \sum_{k\in \mathcal{K} } \frac{\hat{x}[k]}{jk\omega_0}  e^{jk\omega_0 t_{1}}.
\end{equation}
Let $\mathbf{z} = [z_2,\cdots,z_N]^\mathrm{T}\in\mathbb{R}^{N-1}$ be the vector of partial sums, $ \mathbf{\hat{z}} =
\left[-\frac{\hat{x}[-K]}{jK\omega_0}, \cdots, -\frac{\hat{x}[-1]}{j\omega_0},c,\frac{\hat{x}[1]}{j\omega_0},\cdots, \frac{\hat{x}[K]}{jK\omega_0}\right]^{\top}$ $\in\mathbb{C}^{(2K+1)}$ be the vector of FSCs, with $c$ in the zeroth place, and  $\mathbf{A}\in\mathbb{C}^{(N-1) \times (2K+1)}$ be the matrix defined as
\begin{equation}
    \mathbf{A} = {\begin{bmatrix}e^{-jK\omega_0 t_2}& \cdots1\cdots&{e^{jK\omega_0 t_2}}\\
     e^{-jK\omega_0 t_3}&\cdots1\cdots&{e^{jK\omega_0 t_3}}\\ 
     \vdots &\ddots&\vdots \\ 
     e^{-jK\omega_0 t_N}& \cdots1\cdots&{e^{jK\omega_0 t_N}}
    \end{bmatrix}}.
    \label{eq:matrixB}
\end{equation}
Then, \eqref{eq:partial2} can be expressed in matrix form as follows:
% Then, \eqref{eq:partial2} can be written in matrix form as 
\begin{equation}
     \mathbf{z} = \mathbf{A}\, \mathbf{\hat{z}}.
     \label{eq:forward_model}
\end{equation}

Since the set of time instants $\{t_n\}_{n=2}^N$ are distinct, and $\mathbf{A}$ is a Vandermonde matrix, it has full column rank provided that $N-1 \geq 2K+1$. This means that the matrix $\mathbf{A}$ has linearly independent columns. Therefore, we can perfectly recover the vector of FSCs $\mathbf{\hat{z}}$ via
% Since $\{t_n\}_{n=2}^N$ are distinct and $\mathbf{A}$ is a Vandermonde matrix, it has full column rank provided that $N-1 \geq 2K+1$. Hence, we can perfectly recover $\hat{z}[k]$ by
\begin{equation}
    \mathbf{\hat{z}} = \mathbf{A}^\dagger\,\mathbf{z},
    \label{eq:fourierZ}
\end{equation}
where $\mathbf{A}^\dagger$ denotes the Moore-Penrose inverse of $\mathbf{A}$. Once we have $\mathbf{\hat{z}}$, the FSCs $\hat{x}[k]$ can be uniquely determined.
Using
\begin{align}
    \hat{z}[k]= \begin{cases}
     \frac{\hat{x}[k]}{j\omega_0k}, & \text{if $k\in\mathcal{K}$ },\\
      -\sum_{k'\in \mathcal{K}}\left( \frac{\hat{x}[k']}{jk'\omega_0} \right)  e^{jk'\omega_0 t_1} & \text{if $k=0$ }.
      \end{cases}
      \vspace{.1in}
      \label{eq:get_x}
\end{align}
% Similarly, $\mathbf{\hat{z}}$ and $\mathbf{\hat{x}}$ are relates as
The vector of FSCs $\mathbf{\hat{z}}$ and the vector of FSCs $\mathbf{\hat{x}}$ are related by:
\begin{equation}
    \mathbf{\hat{x}} = \left[\hat{z}[-K],\cdots,\hat{z}[-1],\hat{z}[1],\cdots,\hat{z}[K]\right]^{\top} \in\mathbb{C}^{2K}.
    \label{eq:zandx}
\end{equation}
This equation allows us to obtain $\mathbf{\hat{x}}$ by selecting the appropriate elements of $\mathbf{\hat{z}}$, which is the vector obtained from the partial sums of the measurements. Note that the resulting vector $\mathbf{\hat{x}}$ has dimensions $2K$, which implies that it only contains the FSCs for positive and negative frequencies.

Using the vector $\mathbf{\hat{z}}$ and \eqref{eq:zandx}, the vector of FSCs $\mathbf{\hat{x}}$ is uniquely determined. This indicates that, in the modified kernel setup, without zero frequency, the set of FSCs ${\hat{x}[k]}$ can be uniquely determined from time encodings if $N-1 \geq 2K+1$. This condition implies that there should be a minimum of $2K+2$ firing instants within an interval $T$. For an IF-TEM, the minimum firing rate is given as $\frac{b-c}{\kappa\delta}$ (as shown in \cite{naaman2021fri}). Hence, for uniqueness recovery, the IF-TEM parameters must satisfy the inequality $ \frac{b-c}{\kappa\delta} \geq \frac{2K+2}{T}$ (see \cite{naaman2021fri} for details).

A reconstruction algorithm to compute the FRI parameters from IF-TEM firings is presented in Algorithm~\ref{alg:algorithm_grid}. Compared to the technique presented in \cite{naaman2021fri}, our method requires the same number of FSCs in the absence of noise. However, in the presence of noise, as is typically the case in real-world hardware, the proposed approach yields a lower error for the same number of measurements.
% Using $\hat{z}[k]$ and \eqref{eq:get_x}, $\hat{x}[k]$ can be uniquely determined. Thus, in the modified kernel setup, without zero frequency, $ \{\hat{x}[k]\}$ can be uniquely determined from time encodings if $N-1 \geq 2K+1$. This implies that there should be a minimum of $2K+2$ firing instants within an interval $T$. For an IF-TEM, the minimum firing rate is given as $\frac{b-c}{\kappa\delta}$ \cite{naaman2021fri}. Hence, for uniqueness of recovery  the IF-TEM parameters are chosen to satisfy the inequality $ \frac{b-c}{\kappa\delta} \geq \frac{2K+2}{T}$ (see \cite{naaman2021fri} for details).
% A reconstruction algorithm to compute the FRI parameters from TEM firings is presented in Algorithm~\ref{alg:algorithm_grid}. 
% Compared to \cite{naaman2021fri}, as in Theorem \ref{theorem:FRI}, our technique requires the same number of FSCs in the absence of noise.
% In the presence of noise, as is the case with real-world hardware, the proposed method yields a lower error for the same number of measurements.
% Specifically, in the presence of noise, the partial summation sequence $z(t_n)$ developed in this study enables an exact calculation of the FSCs $\mathbf{\hat{x}}$ and a more robust reconstruction is achieved.

\begin{algorithm}[h]
\begin{algorithmic}[1]
\caption{ Reconstruction of a $T$-periodic FRI signal.}
\label{alg:algorithm_grid}
\Statex \textbf{Input:} $N\geq 2K+2$ spike times $\{t_n\}_{n=1}^N$ in a period $T$.
\State Let $n \leftarrow 1$
        \While {$n\leq N-1$}  
             \State   Compute  
            $y_n = -b(t_{n+1}-t_n)+\kappa\delta$
            \State   Compute  
            $z_{n+1}=\sum_{i=1}^{n} y_i$
            \State $n:=n+1$.
\EndWhile
    \State Compute the vector $\mathbf{\hat{z}} = \mathbf{A}^\dagger\,\mathbf{z}$, where $\mathbf{A}$ is defined in \eqref{eq:matrixB}
    \State Compute the Fourier coefficients vector $\mathbf{\hat{x}}$ from $\mathbf{\hat{z}}$ using \eqref{eq:zandx}
    \State   Estimate $\{(a_{\ell},\tau_{\ell})\}_{\ell=1}^L$ using a spectral analysis method for $K \geq 2L$.
\Statex \textbf{Output:} $\{(a_{\ell},\tau_{\ell})\}_{\ell=1}^L$.
\end{algorithmic}
\end{algorithm}
\begin{figure}[t!]
		\centering
		\includegraphics[width=0.5\textwidth]{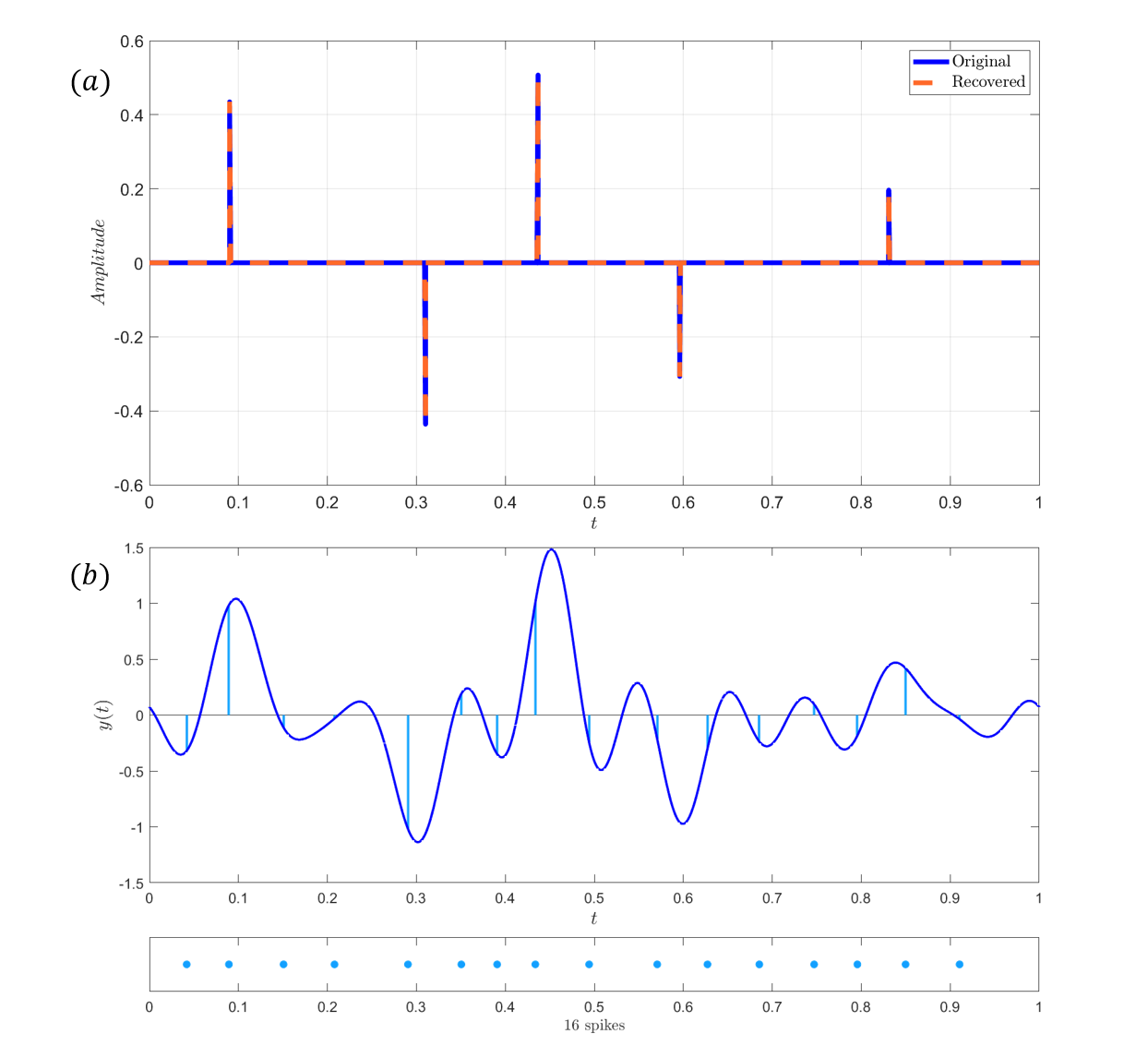}
		\caption{Perfect reconstruction of an FRI signal from IF-TEM measurements with the modified sampling kernel. (a): the input signal and its reconstruction for $L=5$. (b): the filtered signal $y(t)$ and the time instants $t_n$. }
		\label{fig:deltas}
	\end{figure}
\subsection{Numerical evaluation}
In this section, we provide numerical evidence of the validity of Algorithm 1 through simulations. We then show that our proposed reconstruction technique improves the conditioning of the forward transformation, leading to a significant reconstruction improvement that is necessary for precise recovery using real hardware.
To validate Theorem \ref{theorem:FRI}, we consider $h(t)$ as a Dirac impulse with a time period of $T=1$ second, and $L=5$. The amplitudes are selected randomly over the range $[-1,1]$. The time delays are chosen randomly between $(0,1)$ such that they lie on a grid with a resolution of $0.05$. The input signal $x(t)$ is filtered using an SoS sampling kernel with $\mathcal{K} = {-K,\cdots-1,1,\cdots,K}$, where $K=L$. The filtered output $y(t)$ is sampled using an IF-TEM where the IF-TEM parameters are chosen to satisfy the inequality \eqref{eq:sample_bound}. In this particular case, the IF-TEM sampler resulted in $16$ firing instants in one time period, as shown in Fig. \ref{fig:deltas}(b). As shown in Fig. ~\ref{fig:deltas}(a), we achieve perfect recovery of the FRI signal by using a kernel without zero frequency.

As the IF-TEM circuit produces noise into the signal, which perturbs the time instances $\{t_n\}$, we consider the jittered time instances: 
    $t_n^{\prime} =t_n+ \epsilon_n$,
as defined in \eqref{eq:jitter}.
We compare the proposed recovery method with the reconstruction algorithm described by \cite{naaman2021fri} in the presence of perturbations to the measured time instances. Each approach employs a sample kernel lacking a zeroth frequency. While the reconstruction approach proposed by \cite{naaman2021fri} used the forward method defined in \eqref{eq:yx_rel}, our method utilized a different forward method defined in \eqref{eq:partial}.

Using the forward operators or matrices $\mathbf{A}$ and $\mathbf{B}$ (see \eqref{eq:mesRel2} and \eqref{eq:forward_model}), the FSCs are recovered in each of the previously discussed methods. The matrices $\mathbf{A}$ and $\mathbf{B}$ are functions of the measured time instants and sampling kernel. In Fig. \ref{fig:condnum}, the condition numbers of matrices with the same number of $4L+2$ perturbed firing instants are compared as a function of the number of FRI signals $L$. To this aim, 5000 random sets of monotonic sequences $\{t_n\in  [0,T)\}_{n=1}^N$ were generated. As depicted in Fig. \ref{fig:condnum}, the condition number of matrix $\mathbf{A}$ is smaller than that of matrix $\mathbf{B}$. This demonstrates that our reconstruction algorithm enhances stability and noise resilience.

\begin{figure}[t!]
    \centering
    \includegraphics[width=0.5\textwidth]{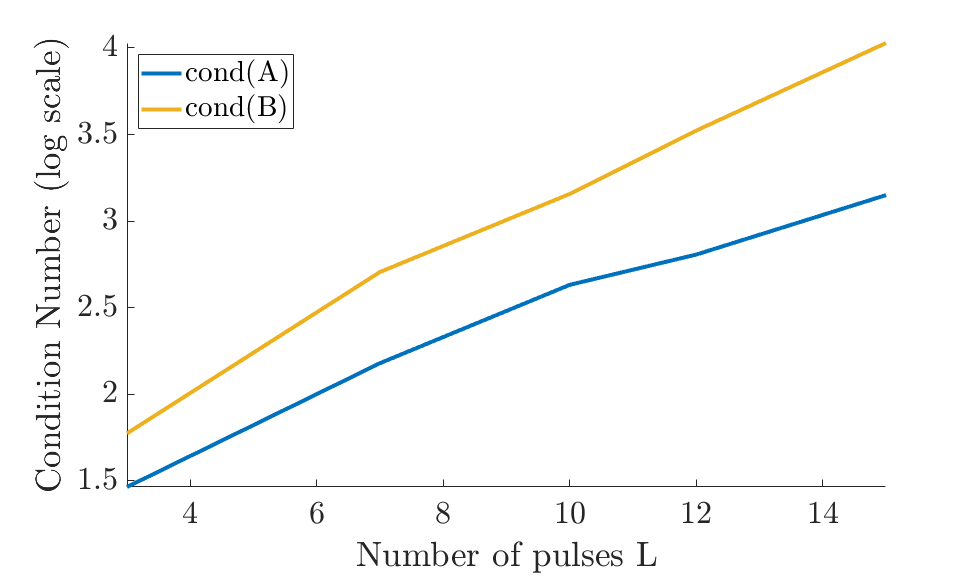}
    \caption{Average condition number of matrices A and B as a function of the number of FRI pulses L.}
    \label{fig:condnum}
\end{figure}

We evaluate and compare the relative mean square error (MSE) in the estimation of time delays performance for the reconstruction accuracy of the presented algorithm with the one suggested by \cite{naaman2021fri}, where the MSE is given by 
\begin{align}
    \text{MSE} = 10 \log\left(\sum_{\ell=1}^L (\tau_\ell-\hat{\tau}_\ell)^2\right),
    \vspace{.1in}
    \label{eq:mse_td}
\end{align}
and $\hat{\tau}_\ell$ is the estimated time delay.
Specifically, we consider the signal $x(t)$ as in \eqref{eq:fri}, with period $T = 1$ second consisting of $L=3$ pulses with $h(t)$ a third-order cubic B-spline. The off-grid time-delays $\{\tau_{\ell}\}_{\ell=1}^3$ and amplitudes $\{a_{\ell}\}_{\ell=1}^3$ are generated at random over intervals $(0, T]$ and $[1, 5]$, respectively. The IF-TEM parameters are $\kappa=1$, $b= 2.5c$ where $c=\max\limits_t |y(t)|$, and $\delta$ is chosen to satisfy \eqref{eq:sample_bound}. 
We consider a sum-of-sincs kernel with $\mathcal{K} =\{-K, \cdots, -1,1 \cdots, K\}$ for the calculation of the Fourier coefficients $\hat{x}[k]$. The time instances $\{t_n\}$ were perturbed as $t_n^{\prime} = t_n + \epsilon_n$ where $t_n$ is the actual time encoding and $\epsilon_n$ is a random variable uniformly distributed over $[-\sigma/2, \sigma/2]$.
% Note that our hardware experiments indicated that $\sigma$ varies from $0.01-0.04$.
We use an annihilating filter with Cadzow denoising to estimate the time-delays in the presence of noise. Since Cadzow denoising requires more than $2L$ consecutive samples of FSCs, we consider $K \geq 2L+1$ while excluding the zero. Based on the fact that the proposed recovery where $0\notin\mathcal{K}$ estimates $\{\hat{x}[k]\}_{k=-K}^{-1}$ and $\{\hat{x}[k]\}_{k=1}^{K}$, we apply Cadzow denoising on each of these sequences independently and then apply block annihilation \cite{block_af} to determine the time-delays jointly. 

The MSEs in the estimation of time-delays for different numbers of FSCs and perturbation levels are shown in Fig. \ref{fig:TE_compare_0}. We used 500 independent noise and FRI signal realizations to compute each MSE value. In Fig. \ref{fig:TE_compare_0}(a) and (b) we show MSEs for \cite{naaman2021fri} and Algorithm 1, both without zero in the sampling kernel, for $K =2L+1$ up to $5L$. We observe that comparing the approaches, we note a gain of up to $10$ dB. Since perturbation in the time encoding is also equivalent to quantization noise, a lower MSE indicates that our proposed approach can operate at lower bits compared to \cite{naaman2021fri}. 
% Thus, showing the robustness of our approach. 

\begin{figure}[!t]
\begin{center}
\begin{tabular}{c c}
\subfigure[Without zero approach in \cite{naaman2021fri}]{\includegraphics[width = 3in]{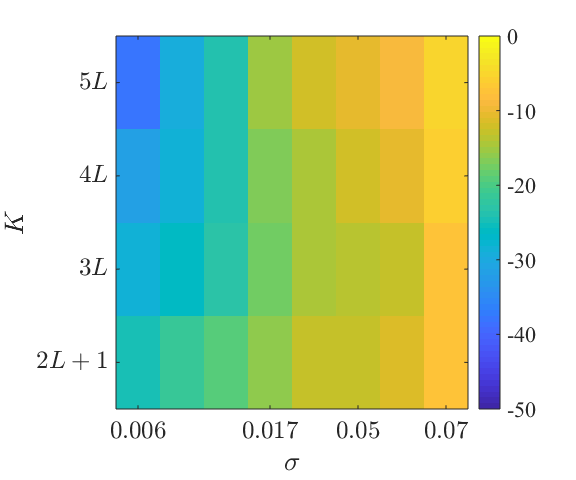}}&\\
\hspace{-.1in}\subfigure[Without zero Algorithm 1]{\includegraphics[width = 3in]{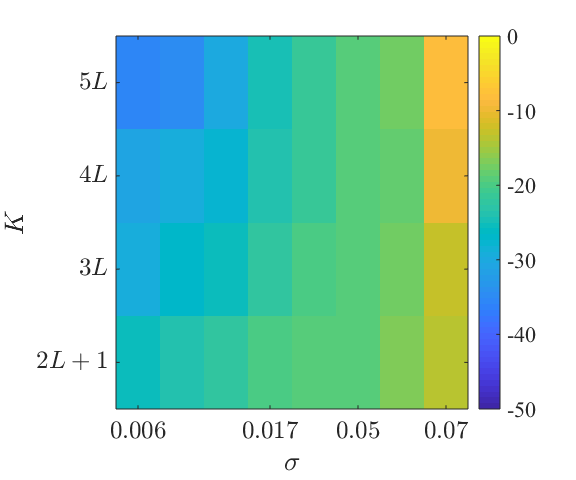}}\vspace{-.1in}
\end{tabular} 
\end{center}
\caption{A comparison of \cite{naaman2021fri} and Algorithm 1 for off-grid time delays with perturbation in the time encodings: our method has lower error compared to \cite{naaman2021fri}.}
\label{fig:TE_compare_0}
\end{figure}

% Using our suggested robust reconstruction, next, we describe the prototype's specifications, difficulties, and design.

%%%%%%%%%%%%%%%%%%%%%%%%%%%%%%%%%%%%%%%%%%%%%%%%%%%%%%%%%%%%%%%%%%%%%%%%%%%%%%%%%%%%%%%%%%%%%%%%%%%%%%%%%%%%%%%%%%%%%%%%%%%%%%%%%%%%%%%%%%%%%%%%%%%%%%%%%%%%%%%%%%%%%%%%%%%%%%%%%%%
% By, utilizing our suggested robust reconstruction
% Next, we explain the hardware prototype's specs, obstacles, and design, as well as the implementation of our proposed robust reconstruction. 
% ref>fig:IAF00 depicts the sampling and reconstruction procedure  the hardware.
% Next, we describe the hardware prototype specifications, challenges and design with the implementation of o. An image of t using the hardware is shown in \ref{fig:IAF00}.
%%%%%%%%%%%%%%%%%%%%%%%%%%%%%%%%%%%%%%%%%%%%%%%%%%%%%%%%%%%%%%%%%%%%%%
\section{Analog board and Hardware challenges}
\label{sec:HW_spec}
In this section, we will describe the specifications of our FRI-TEM hardware prototype.
\begin{figure}[t]
		\centering
		\includegraphics[width= 0.5\textwidth]{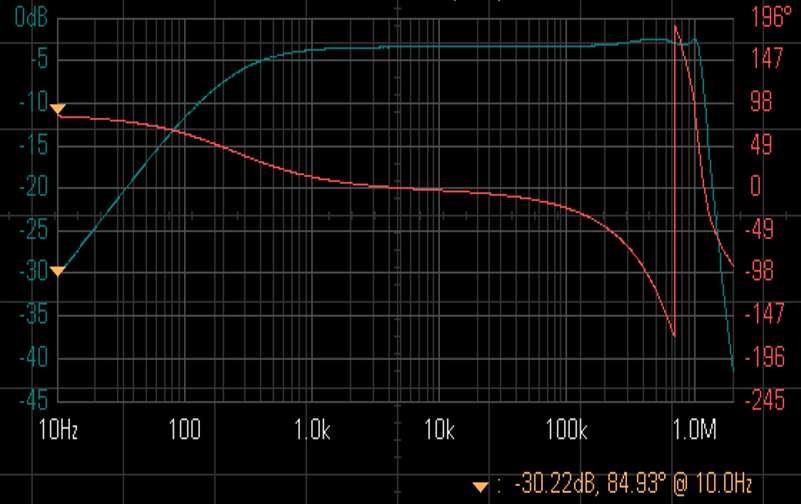}
		\caption{A 1MHz filter Bode plot. The sampling kernel removes the zeroth frequency. The magnitude (in blue) and phase (in red) plotted on a logarithmic frequency scale.}
		\label{fig:1MHz_Bode.png}
\end{figure}
\begin{figure*}[t!]
    \centering
    \includegraphics[width=\textwidth]{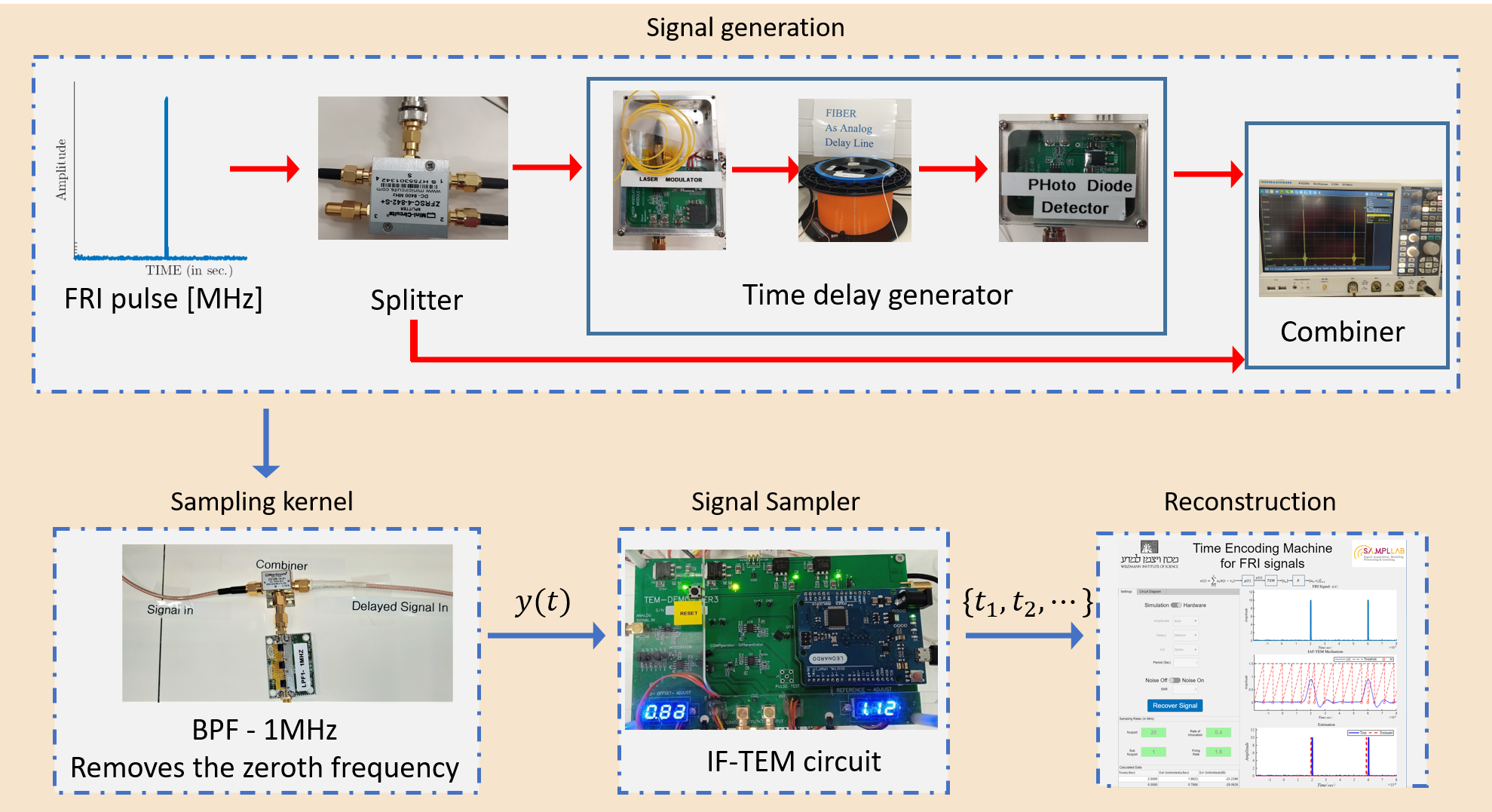}
    \caption{The FRI-TEM hardware prototype contains a signal generator, a sampling kernel and an IF-TEM sampler. The signal generator consists of a delay path of atleast 4$\mu s$, which is built from a modulator optic fiber that ends in a Photo-Diode detector. Then, a combiner receives the  original signal (single one or more) from the generator and the delayed path to create the FRI signal. The generated signal is passed into a BPF of 1$MHz$ which removes the zero frequency. Finally, the resulting signal is sampled by an IF-TEM sampler board. Based on Algorithm 1, the FRI signal is recovered.}
    \label{fig:IAF00}
\end{figure*}
\begin{figure*}[b!]
    \centering
    \includegraphics[width=1.05\textwidth]{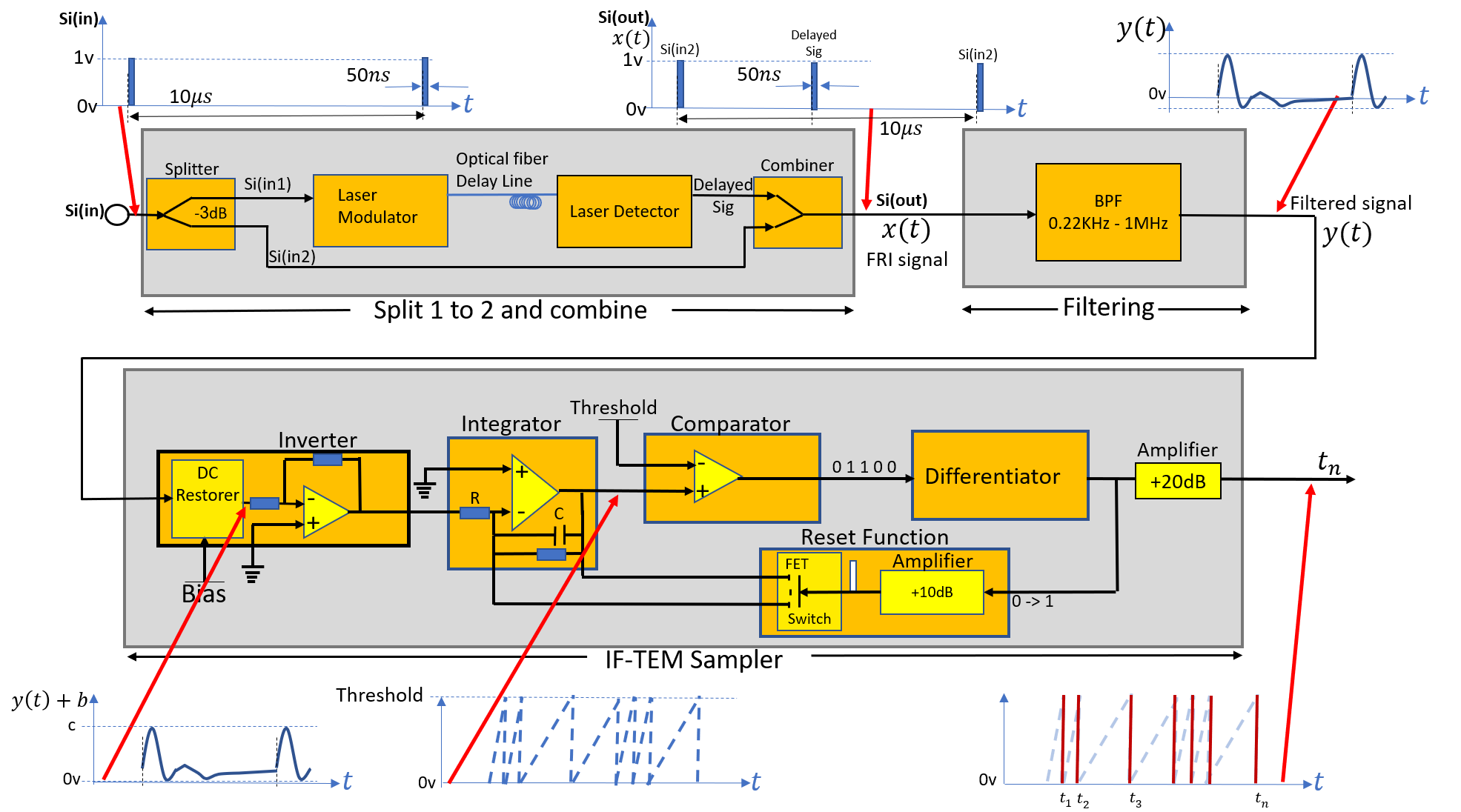}
    \caption{Block diagram of the analog board.}
    \label{fig:replace}
\end{figure*}

% To support a wide range of input FRI signals and still work in the linear region of the capacitor as well as having a fast and complete discharge, an adaptive micro-controller (MC) is added to the reset function. More precisely, the adaptive MC controls the FET operation which are responsible for the capacitor discharging time duration. 

% Now, using the FRI-TEM sampler outputs of the IF-TEM sampler are the time instances $t_n$ which are used to recover the FRI input $x(t)$ using Theorem 1.
% An FRI-TEM hardware prototype with the suggested design is presented and demonstrated in Figures \ref{fig:IAF00} and \ref{fig:replace}, respectively. The IF-TEM components choice are shown in Table \ref{table:1}. In the next section, we present the FRI-TEM analog board.
% The analog board consists of three consecutive stages:
% creating an FRI signal, lowpass filtering, and an IF-TEM sampler. 
% As a block diagram of the analog path end-to-end is depicted in Fig \ref{fig:replace}.
% An FRI-TEM hardware prototype of the suggested design is demonstrated in Figure \ref{fig:IAF00}.
% It can be seen from the Figure \ref{fig:IAF00}, that the hardware contains a signal generator, a sampling kernel and an IF-TEM sampler.

% \lipsum[1-2]
% \begin{figure*}
%   \includegraphics[width=\textwidth,height=4cm]{tiger}
%   \caption{This is a tiger.}
% \end{figure*}

\subsection{FRI-TEM Analog Board}
We begin by discussing the key components of the FRI-TEM hardware implementation, as well as various circuit design considerations.
As shown in Fig. \ref{fig:IAF00} and \ref{fig:replace}, the analog board comprises three sequential stages: the generation of an FRI signal, band-pass filtering, and an IF-TEM sampler.

The FRI signal generator uses an analog approach, which is known for its low digital noise and ability to accurately simulate real-world applications such as radar and ultrasound \cite{newberg1990long}.
% The FRI signal production technique is analog, which is renowned for introducing low digital noise and simulating real-world applications such as radar and ultrasound more accurately \cite{newberg1990long}.
The process of signal production involves several components working together to generate and process a signal.
One possible configuration for a signal generator is to use a scope, a splitter, an analog delay generator, and a passive radio frequency (RF) combiner. 
% One particular choice for the signal generator is t oconsists of a scope, a splitter, an analog delay generator, and a passive radio frequency (RF) combiner. 
The scope generates an FRI pulse ($10-500$ns wide), that is transmitted through the splitter. 
The splitter receives the pulse and sends it to both the delay generator and the combiner (see Fig. \ref{fig:IAF00}).
The delay generator is comprised of a fiber optic cable, a photo-diode encoder, and a photo-diode detector. Encoding the signal with the photo-diode encoder is the initial step of the delay generator. The signal then travels through the fiber optic line, causing a delay of at least $4 \mu s$. The significance of the fiber optic delay implementation originates from its well-known benefits, such as the introduction of low digital noise, which more accurately simulates practical applications. In order to decode the delayed input signal, a photo-diode detector is used to transform the signal to an analog signal with the same frequency as the original FRI input pulse. In Fig. \ref{fig:IAF00} for instance, the FRI signal $x(t)$ \eqref{eq:fri} consists of two 20MHz pulses separated by a relative delay of 4$\mu s$. The output of the combiner, $x(t),$ is then sent as input to the sampling kernel.

\begin{figure*}[h!]
		\centering
		\includegraphics[width=\textwidth]{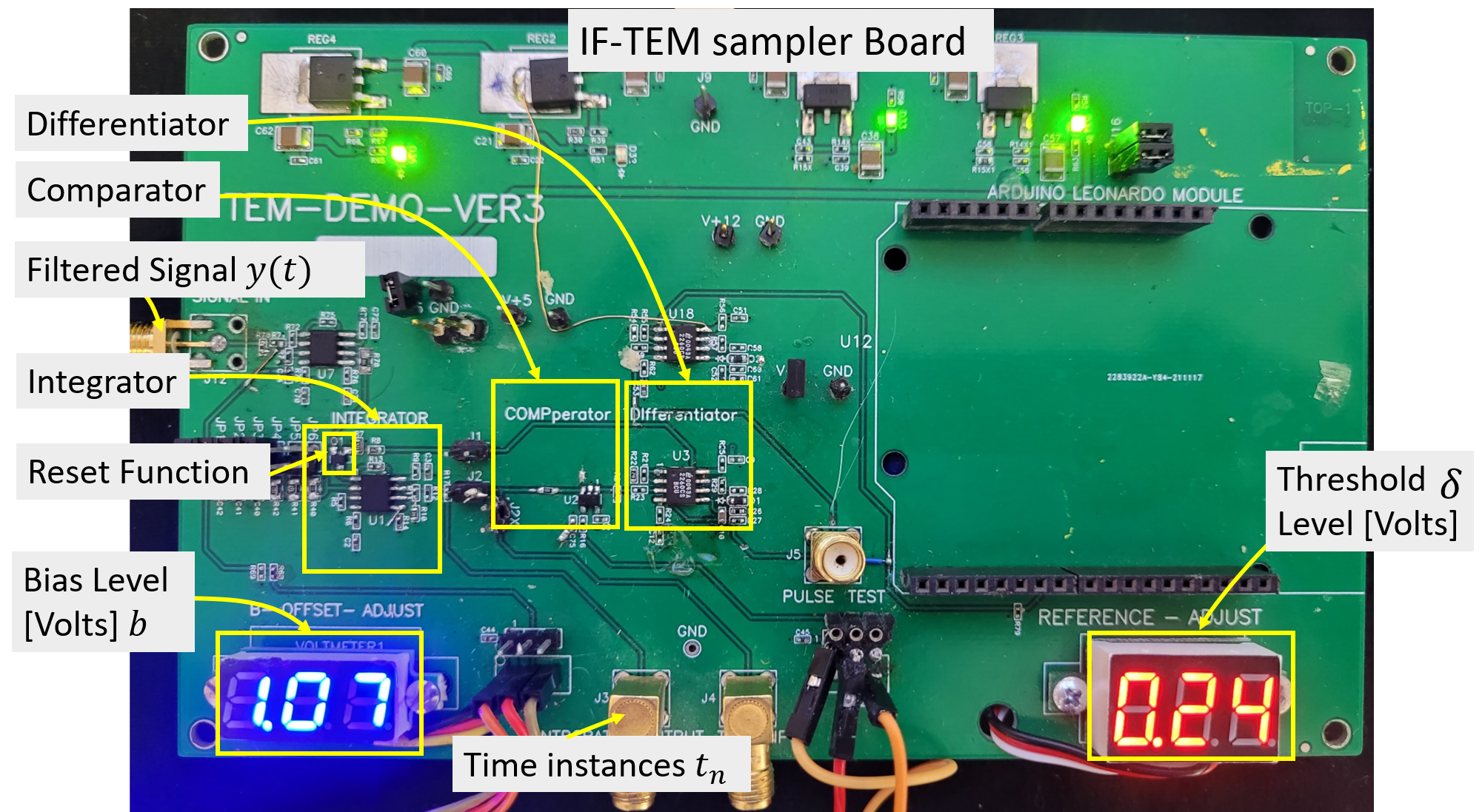}
		\caption{IF-TEM hardware board.}
		\label{fig:IF-TEM_board}
\end{figure*}
\begin{figure*}[t!]
    \centering
    \includegraphics[width=\textwidth]{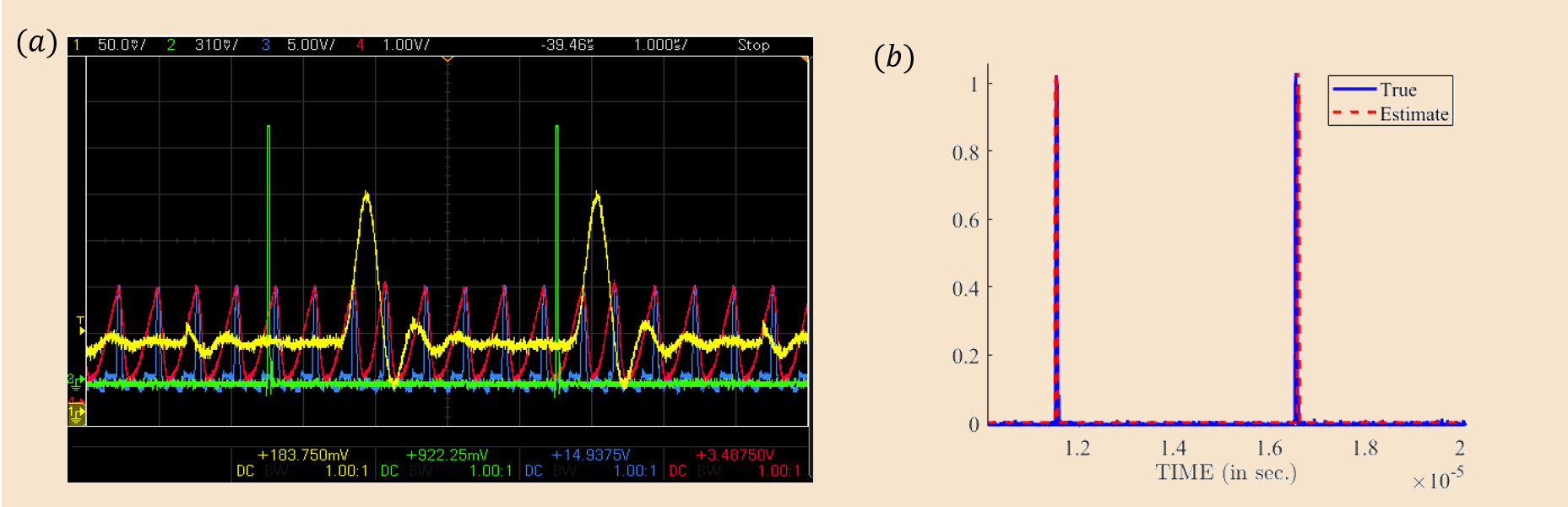}
    \caption{(a). FRI input signal $x(t)$ (green), BPF output $y(t)$ (yellow), and the IF-TEM output resulting in 19 samples (blue). (b). sampling and reconstruction using IF-TEM hardware: the input signal $x(t)$ (blue) and its reconstruction (red).}
    \label{fig:IAF0222}
\end{figure*}
\begin{figure*}[t!]
    \centering
    \includegraphics[width=\textwidth]{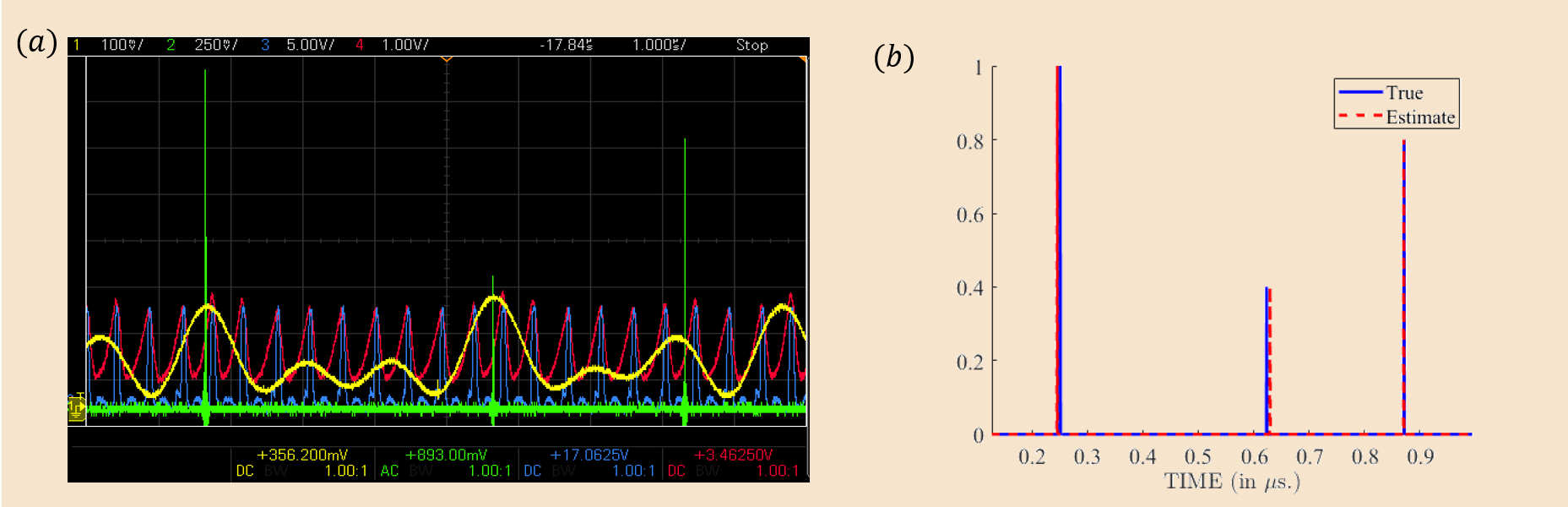}
    \caption{(a). FRI input signal $x(t)$ (green), BPF output $y(t)$ (yellow), and the IF-TEM output resulting in 19 samples (blue). (b). sampling and reconstruction using IF-TEM hardware: the input signal $x(t)$ (blue) and its reconstruction (red).}
    \label{fig:IAF0223}
\end{figure*}
\begin{figure*}[t!]
    \centering
    \includegraphics[width=\textwidth]{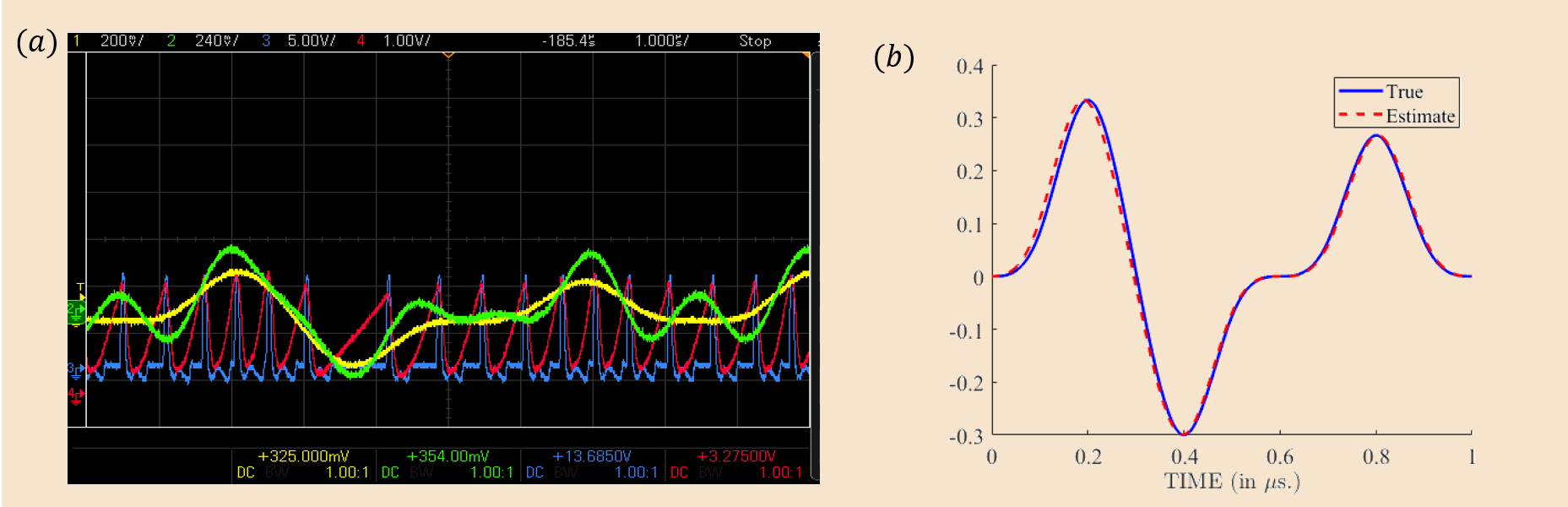}
    \caption{(a). FRI input signal $x(t)$ (yellow), BPF output $y(t)$ (green), and the IF-TEM output resulting in 21 samples (blue). (b). sampling and reconstruction using IF-TEM hardware: the input signal $x(t)$ (blue) and its reconstruction (red).}
    \label{fig:IAF0224}
\end{figure*}
\begin{figure*}[t!]
    \centering
    \includegraphics[width=\textwidth]{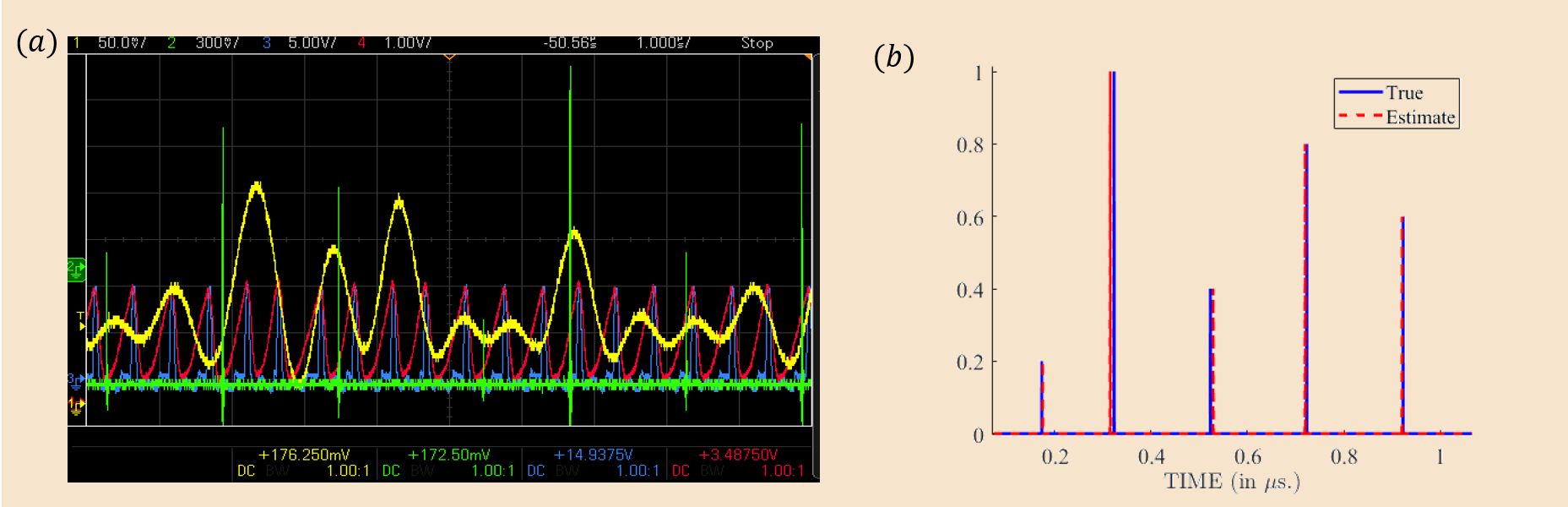}
    \caption{(a). FRI input signal $x(t)$ (green), BPF output $y(t)$ (yellow), and the IF-TEM output resulting in 22 samples (blue). (b). sampling and reconstruction using IF-TEM hardware: the input signal $x(t)$ (blue) and its reconstruction (red).}
    \label{fig:IAF0225}
\end{figure*}
The filter, also known as the sampling kernel, is used to remove the zeroth frequency component of the signal, as shown in Fig. \ref{fig:1MHz_Bode.png}. For example, if the frequency of the signal is 10 Hz, the magnitude of the zeroth frequency component would be $-30$dB. The positioning of the sampling kernel, which is essentially a band-pass filter (BPF), is critical for the sub-Nyquist sampling and reconstruction of FRI signals using an IF-TEM (see Section \ref{sec:REC}). 
% The filter (sampling kernel) removes the zeroth frequency component (as shown in Fig. \ref{fig:1MHz_Bode.png}, if the frequency is $10$Hz, the magnitude is $-30$dB).
% The positioning of the sampling kernel, which is essentially a band-pass filter (BPF), is crucial for the sub-Nyquist sampling and reconstruction of FRI signals with an IF-TEM (see Section \ref{sec:REC}). 
% For a time period of $T=10 \mu s$, the minimum theoretical sampling rate required to recover two FRI pulses in a noise free setting is $0.4$MHz.
% Theoretically, $0.4$MHz is the minimum sampling rate required to recover two FRI pulses in a period of $10 \mu s$ \cite{eldar2015sampling,vetterli2002sampling}. 
In order to accurately recover and analyze two pulses of an FRI signal within a noise-free setting for a short time period such as $T=10 \mu s$, the minimum theoretical sampling rate required is $0.4$MHz \cite{eldar2015sampling,vetterli2002sampling}.
In order to facilitate this fast sampling and reconstruction process, a $1$MHz filter was chosen.
Here, an eight-order $1$MHz BPF is employed, enabling a suitable trade-off between energy usage and reconstruction performance.

% The filter output, $y(t)$ \eqref{eq:yt_by_x22}, is then transmitted to the IF-TEM sampler. The block diagram of the IF-TEM circuit is presented in Fig. \ref{fig:replace}, and Table \ref{table:1} lists the specific components of the IF-TEM circuit. A prototype of the IF-TEM sampler is shown in Fig. \ref{fig:IF-TEM_board}.

The output of the filter, $y(t)$ \eqref{eq:yt_by_x22}, is then transmitted to the IF-TEM sampler. The block diagram of the IF-TEM circuit is shown in Fig. \ref{fig:replace}, and a list of the specific components of the IF-TEM circuit can be found in Table \ref{table:1}. A prototype of the IF-TEM sampler is depicted in Fig. \ref{fig:IF-TEM_board}.
% \begin{figure*}[t!]
% 		\centering
% 		\includegraphics[width=\textwidth]{Figures/Differentiator with_without fast reset.png}
% 		\caption{The differentiator work mechanism.}
% 		\label{fig:Differentiator}
% \end{figure*}
% \begin{figure}[t!]
% \label{fig:TEM_Signals}
% \end{figure}
% \begin{figure}[t!]
% 		\centering
% 		\includegraphics[width= 3.2in]{Figures/Picture6.png}
% 		\caption{FRI input signal $x(t)$ (yellow), LPF output $y(t)$ (cyan), and the IF-TEM output (Green).}
% 		\label{fig:TEMSignals}
% \end{figure}

The primary IF-TEM components consist of the bias $b$, integrator, comparator, differentiator, and reset function. To guarantee sufficient samples for reconstruction, we should ensure that the $\delta$ threshold is achieved at least as many times as the desired sample amount. By adding the bias $b$ to the input $y(t)$, the integrator obtains a signal that is always non-negative. In this case, integration over a non-negative signal is a positive function, and the threshold is always attained. For an FRI signal $x(t)$ with $L$ pulses, it can be shown that the sampler input filtered signal $y(t)$ is constrained by \cite{naaman2021fri}
\begin{align}
    |y(t)| \leq c=  
{L\,\,a_{\max} \,\,
    \|g\|_{\infty} \|h\|_{1}},
    \vspace{.1in}
    \label{eq:c}
\end{align}
where $g$ and $h$ are the known filter and pulse shape, respectively. 
Consequently, the bias $b>c$, which is effectively a constant DC voltage, is selected manually using a potentiometer, which is a device that allows the user to adjust the electrical resistance in a circuit by turning a knob. By adjusting the resistance, the user is able to fine-tune the value of the bias to the desired level. It is important to carefully select the appropriate bias value in order to ensure that the IF-TEM system is able to function properly.

The output of the integrator is sent to the comparator, which compares the integrator voltage to a predefined threshold $\delta$.
The threshold is a constant DC voltage that is implemented in our hardware utilizing a potentiometer that is manually regulated and adjustable.
% When the threshold is met, the output of the comparator is sent to the differentiator, which changes its output level (see Figure \ref{fig:Differentiator}.(a)-(b)).
% When the threshold is met, the comparator changes its logical value. When the comparator's input is below the threshold, the comparator outputs '0' and when the comparator's input is above the threshold, it outputs '1'. 
% When the integrator reaches the threshold, the comparator will emit a series of logical 1 values.
% When the threshold is reached, the comparator's logical value changes. When the comparator's input is below the threshold, it outputs '0', and when it is above the threshold, it outputs '1'.
% The comparator will emit a sequence of logical '1' values when the integrator hits the threshold.
The comparator is responsible for comparing the voltage produced by the integrator to a predefined threshold value. When the integrator voltage reaches or exceeds the threshold, the comparator's output changes. If the comparator's input is below the threshold, it will output a logical value of '0', while if the input is above the threshold, the output will be '1'. In other words, the comparator will produce a sequence of logical '1' values when the integrator voltage hits the threshold. This change in the comparator's output signal indicates that the threshold has been reached and triggers the next stage in the IF-TEM process.

The output of the comparator is sent to the differentiator, which generates a short pulse that activates the fast reset function. This function is responsible for capturing the time instances $t_n$. The reset function consists of an amplifier and a field-effect transistor (FET) that work together to quickly and completely discharge the integrator capacitor.
In greater detail, the FET functions as a switch and is controlled by the pulse produced by the differentiator, which determines the duration of time that the FET is active. This allows the integrator capacitor to be fully discharged. The FET has three terminals: source, gate, and drain. By providing a voltage of "1" to the gate terminal, the FET can modify the conductivity between the drain and source terminals, which allows the current flow to be regulated. This results in a rapid and complete discharge of the integrator capacitor.
% The output of the comparator is provided to the differentiator, which generates a short pulse that activates the fast reset function and enables the capture of the time instances $t_n$.
% The reset function consists of an amplifier and a FET in order to discharge the integrator capacitor quickly and completely.
% In greater detail, the FET is a switch, and the differentiator sends a short pulse that controls the FET's active duration time, allowing the integrator capacitor to discharge completely.
% Source, gate, and drain are the three terminals of a FET, which are fast switching devices with three terminals. By providing a voltage of "1" to the gate, FETs modify the conductivity between the drain and source to regulate current flow. Consequently, a rapid and full discharge is accomplished.

% {\color{blue}{ADD about the differentiator and amplifier.}}{\color{red}{The integrator output is passed to the comparator, which compares the integrator voltage with a pre-defined threshold $\delta$.
% The threshold is a constant DC voltage which is implemented in our hardware using a manually controlled adjustable potentiometer.
% Each time the threshold is reached, the comparator output is transmitted to the differentiator, which generates a pulse. 
% The differentiator role is to enable capturing the instances $\{t_n\}$ as well as activating the reset function to reset the integrator.
% Specifically, the reset function is comprised of a micro-controller and a FET to enable a full and fast discharge of the integrator capacitor.}}

\begin{table}[h!]
\centering
\caption{List of Hardware Components}
\label{table:1}
\begin{tabular}{||c c c||} 
 \hline
 Device & Reference & Manufacturer \\ [0.5ex] 
 \hline\hline
 Buffer & AD899 & Mini-Circuits \\
 Integrator &  LT1364 & Analog Devices \\
 Comparator & TLV3201 & Texas Instruments \\
 Differentiator & LT1364 & Analog Devices \\ [1ex]
 \hline
\end{tabular}
\end{table}

% \section{Performance analysis}
% In theory, signals can be perfectly recovered from their sampled time points when the sampling rate is above Nyquist rate for bandlimited signals or the rate of innovations for FRI signals. In practice, due to hardware limitations, the reconstructed signal is not precisely the same as the input signal. The effective number of bits (ENOB) has typically been used to evaluate ADCs sampler performances and is determined by the output signal to noise and distortion ratio
% (SNDR) as \cite{kong2012time,rastogi2011integrate}
% \begin{equation}
%     \text{ENOB} = \frac{\text{SNDR}-1.76}{6.02},
%     \end{equation}
% where SNDR is defined as 
% \begin{equation}
%     SNDR = \frac{||{\Bar{x}(t)}||_{L_2[0,T]}}{||{x(t)-\Bar{x}(t)}||_{L_2[0,T]}},
% \end{equation}
% where $\Bar{x}(t)$ is the reconstructed signal. 

% % In Figure~\ref{fig:MSE}, we show the MSE of the two algorithms as a function of the number of noisy time instances. The MSE of
% \begin{figure}[!t]
% \begin{center}
% \subfigure[]{\includegraphics[width = 3.2in]{Figures/x_fri2.png}}
% \subfigure[]{\includegraphics[width = 3.2in]{Figures/y_fri2.png}}
% \subfigure[]{\includegraphics[width = 3.2in]{Figures/r_fri2.png}}
% \end{center}
% \caption{Sampling and reconstruction using IF-TEM hardware:
% (a) the input signal $x(t)$. (b) the filtered signal $y(t)$, the integrator output, and the trigger times. (c) the input signal $x(t)$ and its reconstruction.}
% \label{fig:TE_compare}
% \end{figure}

\subsection{Circuit challenges}
To implement an IF-TEM circuit, it is necessary to employ an integrator that operates according to \eqref{eq:trigger0}. Specifically, the integrator capacitor must operate in its linear domain, which is continuously charged or rising, as long as the input signal is positive.
Additionally, the IF-TEM thresholding process requires a fast reset mechanism. Therefore, our goal is to develop an integrator and reset function in which the capacitor of the integrator operates in its linear zone and discharges rapidly and completely.
The main challenge in the implementation of the IF-TEM hardware is to design and implement such an IF-TEM integrator capacitor, while supporting a wide range of input FRI signals without circuit modification. 
By utilizing the differentiator and a FET in the reset function, both the entire discharge and rapid discharge of the capacitor are accomplished for a variety of FRI signals.
Next, we provide results from our hardware and compare them to our theoretical results from Theorem \ref{theorem:FRI}.
%%%%%%%%%%%%%%%%%%%%%%%%%%%%%%%%%%%%%%%%%%%%%%%%%%%%%%%%%%%%%%%%%%%%%%%%%%%%
\section{Hardware Experiments}
\label{sec:HW_experiments}
% To evaluate the viability and development of the proposed system, we first conduct measurements on the FRI-TEM hardware system we created. As shown in Fig. \ref{fig:IAF0222}(a), we consider an FRI input signal $x(t)$ generated as a sum of two 50ns wide pulses with a relative delay of 5$\mu s$.  We use the sampling kernel as discussed in section \label{sec:kernel}. The IF-TEM circuit parameter are set to $\kappa=3\cdot 10^{-8}$, the bias $b=3V$ and threshold $\delta = 1.5V$. 
To determine the potential and feasibility of the development proposed system, we performed experiments on the FRI-TEM hardware system that we built. As depicted in Fig. \ref{fig:IAF0222}(a), we consider an FRI input signal, referred to as $x(t)$, consisted of two pulses with a width of $100$ns and a delay of $5\mu s$ between them. The sampling kernel mentioned in Section \ref{sec:kernel} was utilized in these experiments. The parameters for the IF-TEM circuit were set to a value of $\kappa=3\cdot 10^{-8}$, with a bias of $b=3V$ and a threshold of $\delta = 1.5V$.
The specific time delays and amplitudes used in this input signal were chosen arbitrarily, and the IF-TEM parameters were selected to adhere to the constraints outlined in \eqref{eq:sample_bound}.
As demonstrated in Fig. \ref{fig:IAF0222}(a), the filtered signal $y(t)$ was transmitted to an IF-TEM sampler, which produced 19 time instances $t_n$, 
resulting in a firing rate of 1.9MHz, which is 4.75 times the rate of innovation and 10.5 times the Nyquist rate. It is important to note that a minimum of $4L+2=10$ time instances are required for off-grid reconstruction.
Fig. \ref{fig:IAF0222}(b) illustrates a comparison between the original input signal and the estimated signal. This demonstrates that the parameters of the FRI system can be robustly estimated while operating at a rate that is 10 times lower than the Nyquist rate.
% The time delays and amplitudes are chosen arbitrarily and the IF-TEM parameter are chosen to obey \eqref{eq:sample_bound}. 
% As demonstrated in Fig. \ref{fig:IAF0222}(a), the filtered signal $y(t)$ is transmitted to an IF-TEM sampler, which produces 19 time instances $t_n$, resulting in a firing rate of 1.9MHz, which is 4.5 times the rate of innovation and 10 times the Nyquist rate.
% Note that at least $4L+2 = 10$ time instances are needed for off-grid reconstruction.
% The original input and the estimated signal are shown in Fig. \ref{fig:IAF0222}(b). This demonstrates that the FRI parameters can be robustly estimated while operating at a rate that is 10 times lower than the Nyquist rate.
% Thus, we show that the FRI parameters are robustly estimated while operating at 10 times less than the Nyquist rate. 

In Fig. \ref{fig:IAF0223}(a), Fig. \ref{fig:IAF0224}(a), and Fig. \ref{fig:IAF0225}(a), we demonstrate sampling and reconstruction of FRI signals with $L=3,5$ for $h(t)$ as a Dirac impulse and stream of pulses. 
The FRI signal is represented by the green curve, the filtered signal $y(t)$ is shown in yellow, and the time instances $t_n$ produced by the IF-TEM sampler are depicted in blue. In each of these figures, the number of time instances produced is 19, 21, and 22, respectively, resulting in firing rates of 1.9 MHz, 2.1 MHz, and 2.2 MHz, which are all between 9.5 and 10.5 times the Nyquist rate.
The reconstructed FRI signals are shown in Figures \ref{fig:IAF0223}(b), \ref{fig:IAF0224}(b), and \ref{fig:IAF0225}(b). 
The maximum error in time delay estimation is -25 dB. These results indicate that our proposed sampling and reconstruction method is suitable for use in radar and ultrasonic imaging applications. 
% The FRI signal is shown in green, the filtered signal $y(t)$ is shown in yellow, and the time instances $t_n$ produced by the IF-TEM sampler are shown in blue, with 19, 21, and 22 time instances being produced in each case, respectively,resulting in a firing rate of 1.9MHz, 2.1MHz and 2.2MHz which is 9.5-10.5 times the Nyquist rate. 
% The FRI signal (green), the filtered signal $y(t)$ (yellow) is transmitted to an IF-TEM sampler, which produces 19,21 and 22 time instances $t_n$ (blue), respectively. The reconstruction of the FRI signals are shown in Fig. \ref{fig:IAF0223}(b), Fig. \ref{fig:IAF0224}(b), and Fig. \ref{fig:IAF0225}(b), respectively. The greatest error in time delay estimation is -25 dB. This indicated that our proposed sampling and reconstruction mechanism is applicable for radar and ultrasonic imaging applications. 

Figure \ref{fig:compare} presents a comparison between the reconstruction using the hardware measurements and the simulation for the amplitudes and time delays of the FRI signals with two pulses. This comparison is used to evaluate the performance of the proposed hardware prototype and reconstruction method by comparing the results obtained from the hardware measurements with those obtained from the simulation. The evaluation involves calculating the error between the reconstructed signals obtained from the hardware and simulation, as well as comparing the estimated FRI parameters. This comparison provides insight into the accuracy and reliability of the hardware and reconstruction approach. 
The error in the estimation of the time delay is found to be -25 dB, and this result is consistent with the findings when using $L=3.5$ pulses. 

\begin{figure}
    \centering
    \includegraphics[width=0.5\textwidth]{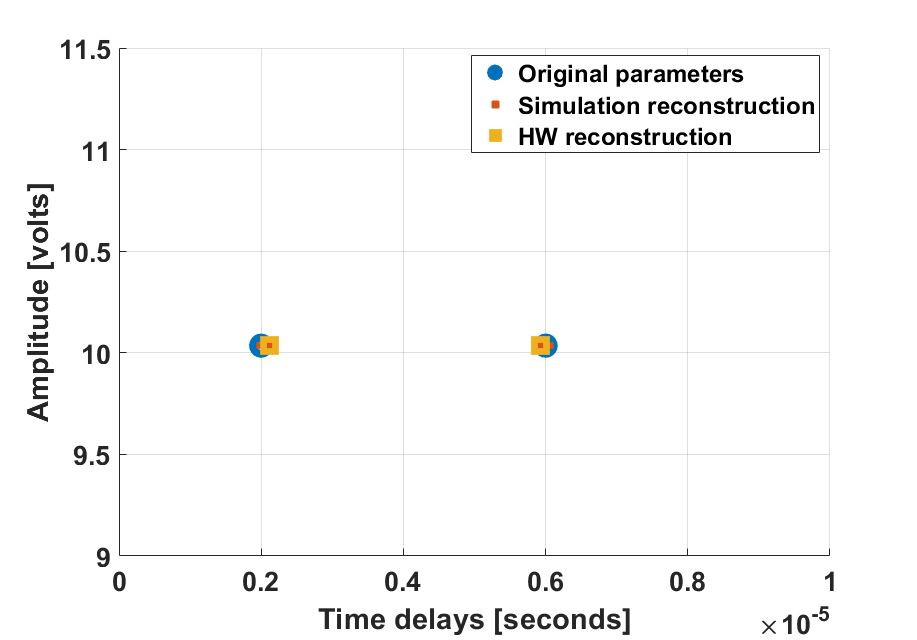}
    \caption{A comparison between the reconstruction using the hardware measurements and the simulation.}
    \label{fig:compare}
\end{figure}
% Next, we evaluate and compare the relative MSE in the estimation of time delays performance for the reconstruction accuracy of two pulses. In our case, the time delays MSE error is only -25dB.

\section{Conclusion}
\label{sec:Conclusions}

In this work, we studied the problem of recovering FRI signals using an IF-TEM sampler. To this end, we introduced a hardware prototype of a sub-Nyquist IF-TEM ADC and developed a robust reconstruction approach to accurately retrieve the FRI parameters. The hardware prototype that we introduced is an asynchronous, energy-efficient ADC that estimates the FRI parameters using a sub-Nyquist framework, which allows it to operate at rates significantly lower than the Nyquist rate.
We have demonstrated that our proposed hardware and reconstruction method can retrieve the FRI parameters with a reconstruction error of up to -25 dB while operating at rates approximately 10 times lower than the Nyquist rate. These results suggest that the proposed hardware prototype and reconstruction approach are effective and efficient in accurately recovering FRI signals and may be useful in various applications, such as radar and ultrasonic imaging. In comparison to traditional ADCs, the proposed prototype is asynchronous and energy-efficient, which may make it particularly attractive for use in energy-constrained systems such as battery-powered devices where these factors are important considerations.

In this study, we investigated the problem of recovering FRI signals using an IF-TEM sampler. To address this challenge, we proposed a hardware prototype of a sub-Nyquist IF-TEM ADC and developed a robust reconstruction approach to accurately retrieve the FRI parameters. The hardware prototype that we introduced is an asynchronous, energy-efficient ADC that estimates the FRI parameters using a sub-Nyquist framework, which allows it to operate at rates significantly lower than the Nyquist rate.
Our proposed hardware and reconstruction method have been demonstrated to be able to retrieve the FRI parameters with a reconstruction error of up to -25 dB while operating at rates approximately 10 times lower than the Nyquist rate. These results suggest that the proposed hardware prototype and reconstruction approach are effective and efficient in accurately recovering FRI signals and may be useful in various applications such as radar and ultrasonic imaging. In comparison to traditional ADCs, the proposed prototype is asynchronous and energy-efficient, which may make it particularly attractive for use in energy-constrained systems such as battery-powered devices where these factors are important considerations.

%%%%%%%%%%%%%%%%%%%%%%%%%%%%%%%%%%%%%%%%%%%%%%%%%%%%%%%%%%%%%%%

% \appendix
% \section{Proof of Theorem 1}
% \clearpage
\bibliographystyle{ieeetr}
\bibliography{refs}

%[33] Data set
\end{document}